\documentclass[format=acmsmall, review=false]{acmart}

\usepackage{acm-ec-19}
\usepackage{booktabs} % For formal tables

%OURS:
\usepackage{amssymb}
\usepackage{verbatim}
\usepackage{bm}
\usepackage{tikz}
\usetikzlibrary{positioning,arrows}
\usepackage{caption,subcaption}
\usepackage{color}

\theoremstyle{definition}

\theoremstyle{remark}
\newtheorem*{reminder}{Reminder}

\setlength{\itemsep}{0.5em}

\begin{document}
% Title portion. Note the short title for running heads
\title[The Complexity of SPR matching]{The Complexity of Student-Project-Resource Matching-Allocation Problems}
\author{Anisse Ismaili}

%In this paper, we consider a three-sided student-project-resource matching-allocation problem, in which students (respectively resources) are many-to-one matched (respectively allocated) to projects. While students have preferences on projects and projects on students, a project's capacity for students is endogenously determined by the sum of resources allocated to it.

\begin{abstract}
\vspace*{1cm}    
In this work, we consider a three sided student-project-resource matching-allocation problem, in which students have preferences on projects, and projects on students. While students are many-to-one matched to projects, indivisible resources are many-to-one allocated to projects whose capacities are thus endogenously determined by the sum of resources allocated to them. Traditionally, this problem is divided into two separate problems: (1) resources are allocated to projects based on some expectations (resource allocation problem), and (2) students are matched to projects based on the capacities determined in the previous problem (matching problem). Although both problems are well-understood, unless the expectations used in the    first problem are correct, we obtain a suboptimal outcome. Thus, it is desirable to solve this problem as a whole without dividing it in two.  

Here, we show that finding a nonwasteful matching is $\text{FP}^{\text{NP}}[\text{log}]$-hard, and deciding whether a stable matching exists is $\text{NP}^{\text{NP}}$-complete. These results involve two new problems of independent interest: \textsc{ParetoPartition}, shown $\text{FP}^{\text{NP}}[\text{poly}]$-complete and strongly $\text{FP}^{\text{NP}}[\text{log}]$-hard, and \textsc{$\forall\exists$-4-Partition}, shown strongly $\text{NP}^{\text{NP}}$-complete. 
\end{abstract}

\maketitle

\vspace*{1cm}   

\section{Model}

In this section, we introduce necessary definitions and notations.

\begin{definition}[Student-Project-Resource (SPR) Instance]
\label{def:SPRinstance}
It is a tuple $(S,P,R,\succ_S,\succ_P,T_R,q_R)$.
\begin{itemize}
  \item $S=\{s_1, \ldots, s_{|S|}\}$ is a set of students.
  \item $P=\{p_1, \ldots, p_{|P|}\}$ is a set of projects.
  \item $R=\{r_1, \ldots, r_{|R|}\}$ is a set of resources.
  \item $\succ_S = (\succ_s)_{s \in S}$ are the students'
preferences over set $P\cup\{\emptyset\}$.
  \item $\succ_P =(\succ_p)_{p \in P}$ are the projects' 
preferences over set $S\cup\{\emptyset\}$.
\item Resource $r$ has capacity $q_r\in\mathbb{N}_{>0}$, and $q_R=(q_r)_{r\in R}$.
  \item Resource $r$ is compatible with $T_r\subseteq P$, and $T_R=(T_r)_{r\in R}$.
\end{itemize}
%For every student $s\in S$, order $\succ_s$ represents her preference
%over set $P\cup \{\emptyset\}$. For every project $p\in P$, order
%$\succ_p$ represents its preference over set ${S}\cup \{\emptyset\}$.
 For soundness,\footnote{Without these properties, this work is still valid, though a claiming or envious pair $(s,p)$ may not necessarily make sense.} 
 every preference $\succ_p$ may extend to $2^S$ in a
 non-specified manner such that:
 \begin{itemize}
\setlength{\itemsep}{0.5em}
 \item $\forall s,s'\in S, \forall S'\subseteq S\setminus\{s,s'\}, 
 s \succ_p s' \Leftrightarrow S'\cup \{s\} \succ_p S'\cup \{s'\}$ (responsiveness)
 and 
 \item  $\forall s\in S,\forall S'\subseteq S\setminus\{s\}, s\succ_p\emptyset \Leftrightarrow S'\cup \{s\}\succ_pS'$ (separability).
 \end{itemize}
\end{definition}

%Let $X \subseteq S \times P$ denote the set of contracts.
Contract $(s,p)\in S \times P$ means that student $s$ is matched to project $p$.
Contract $(s,p)$ is acceptable for student $s$ (resp. project $p$) if $p
\succ_s \emptyset$ holds (resp. $s \succ_p \emptyset$). The contract is acceptable
when both hold. 
W.l.o.g., we define set of contracts $X\subseteq S\times P$ by $(s, p) \in X$ if and only if
it is acceptable for $p$.\footnote{%
   For designing a strategyproof mechanism, we assume each $\succ_s$ is
   private information of $s$, while the rest of parameters are public. Thus, $X$ does not need to be part of the input, since it is characterized by projects' preferences.} 
%Given contracts subset $Y \subseteq X$, student $s$ and project $p$,
%let $Y_s=\{(s,p) \in Y\mid p \in P\}$ and
%$Y_p=\{(s,p) \in Y \mid s \in S\}$.

\begin{definition}[Matching]
    A matching is a subset $Y\subseteq X$, 
    where for every student $s\in S$, subset $Y_s=\{(s,p)\in Y\mid p\in P\}$ satisfies $|Y_s|\leq 1$, and either %$Y_s=\emptyset$, or $Y_s=\{(s,p)\}$ and $p\succ_s\emptyset$, holds.
   \begin{itemize}
   \item $Y_s=\emptyset$, or
   \item $Y_s=\{(s,p)\}$ and $p\succ_s\emptyset$, holds.
   \end{itemize}
For a matching $Y$, let $Y(s)\in P\cup\{\emptyset\}$ denote the project $s$ is matched,
and $Y(p)\subseteq S$ denote the set of students assigned to project $p$.
\end{definition}
%\begin{definition}[Matching]
%    A (many-to-one) matching is a function $Y:S\rightarrow P\cup \{\emptyset\}$
%    where for every student $s\in S$, 
%    if $Y(s)\in P$ then $Y(s)\succ_s\emptyset$ (individual rationality).
%    For simplicity, let $Y(p)\subseteq S$ denote the set of students matched to $p$. 
%\end{definition}
%We denote student $s$ is matched to project $Y(s)\in P\cup\{\emptyset\}$, and project $p$ to students $Y(p)\subseteq S$.

\begin{definition}[Allocation]
An allocation $\mu:R\rightarrow P$ maps each resource $r$ to a project $\mu(r)\in T_r$. 
(A resource is indivisible.)
Let $q_{\mu}(p)=\sum_{r \in \mu^{-1}(p)} q_r$.\footnote{For $\mu^{-1}(p)=\emptyset$, we assume that an empty sum equals zero.}
\end{definition}

\begin{definition}[Feasibility]
A feasible matching $(Y,\mu)$ is a couple of a matching and an allocation where
for every project $p\in P$, it holds that $|Y(p)| \leq q_{\mu}(p)$.
\end{definition}
In other words, matching $Y$ is feasible with allocation $\mu$ if
each project $p$ is allocated enough resources by $\mu$ to accommodate $Y(p)$.
We say $Y$ is feasible if there exists $\mu$ such
that $(Y, \mu)$ is feasible.

Traditionally (e.g. with fixed quotas), for feasible matching $(Y, \mu)$ and $(s,p)\in X\setminus Y$, we say student $s$ \emph{claims an empty seat} of $p$ if $p \succ_s Y(s)$ and matching $Y \setminus \{(s,Y(s))\} \cup \{(s, p)\}$ is feasible with \emph{same} allocation $\mu$. However, in our setting \cite{Goto:AEJ-micro:2016}, since the distributional constraint is endogenous and as flexible as allocations are, the definition of nonwastefulness uses this flexibility, as follows.
\begin{definition}[Nonwastefulness]
Given feasible matching $(Y,\mu)$, 
a contract $(s,p)\in X\setminus Y$ is a claiming pair
if and only if: 
\begin{itemize}
\item student $s$ has preference $p \succ_s Y(s)$, and 
\item matching $Y\setminus \{(s,Y(s))\}\cup\{(s,p)\}$ is feasible with some \emph{possibly new} allocation $\mu'$.
\end{itemize} 
A feasible matching $(Y,\mu)$ is nonwasteful if it has no claiming pair.
\end{definition}

%\begin{definition}[Nonwastefulness]
%Given feasible matching $(Y, \mu)$ and contract $(s,p)\in X\setminus Y$,
%we say student $s$ claims an empty seat of $p$ if and only if:
%\begin{itemize}
%\item student $s$ has preference $p \succ_s Y(s)$, and
%\item matching $(Y \setminus \{(s, Y(s))\}) \cup \{(s, p)\}$
%is feasible with $\mu$.
%\end{itemize} 
%Furthermore, given feasible matching $(Y,\mu)$,
%we say a contract $(s,p)\in X$ is a claiming pair if and only if
%there exists some allocation $\mu'$ such that
%$s$ can claim an empty seat of $p$. 
%A feasible matching $(Y,\mu)$ is nonwasteful if it has no claiming pair.
%\end{definition}

  In other words, $(s, p)$ is a claiming pair 
  if it is possible to move $s$ to a more preferred project $p$
  while
  keeping the assignment of other students unchanged with allocation
  $\mu'$. Note that $\mu'$ can be different from $\mu$.
  Thus, $(s, p)$ can be a claiming pair even if moving her to $p$ is
  impossible with the current allocation $\mu$, but it becomes
  possible with a different/better allocation $\mu'$.

\begin{definition}[Fairness]\label{def:fair}
Given feasible matching $(Y,\mu)$, 
contract $(s,p)\in X\setminus Y$ is an envious pair
  if and only if: 
\begin{itemize}
\item student $s$ has preference $p \succ_s Y(s)$, and 
\item there exists student $s'\in Y(p)$ such that $p$ prefers
$s \succ_p s'$.\footnote{Note that matching $(Y\setminus \{(s,Y(s)),(s',Y(s'))\})\cup\{(s,p)\}$ is still feasible with same allocation $\mu$.}
\end{itemize}
We also say $s$ has justified envy toward $s'$ when the above
conditions hold.
A feasible matching  $(Y,\mu)$ is fair if it has no envious pair
(equivalently, no student has justified envy).
\end{definition}
In other words, student $s$ has justified envy toward $s'$, 
  if $s'$ is assigned to project $p$, although 
  $s$ prefers $p$ over her current project $Y(s)$
  and project $p$ also prefers $s$ over $s'$.

\begin{definition}[Stability]
A feasible matching $(Y,\mu)$ is stable if it is nonwasteful and fair (no claiming/envious pair). 
\end{definition}

\begin{definition}[Pareto Efficiency]
    Matching 
    $Y$ is Pareto dominated by %matching 
    $Y'$ if all students weakly prefer
    $Y'$ over $Y$ and at least one student strictly prefers $Y'$.
    A feasible matching
    is Pareto efficient if
    no feasible matching 
    Pareto dominates it. 
    %That is, it admits no claiming pair and no envious pair.
\end{definition}
\begin{comment}
We say feasible matching is Pareto efficient, if it is not
Pareto dominated by another
feasible matching.\footnote{%
  $Y$ is weakly dominated by $Y'$ if all students weakly prefer
  $Y'$ (over $Y$) and at least one student strictly prefers
  $Y'$.}
\end{comment}
Pareto efficiency implies nonwastefulness (not vice versa).

\begin{definition}[Mechanism]
    Given any SPR instance, {
      a mechanism} %$\varphi$
    %%%MY we don't use \varphi
    outputs a feasible
matching $(Y, \mu)$. 
If a mechanism always obtains a feasible matching that satisfies 
property A (e.g., fairness), we say this mechanism
    is A (e.g., fair).
      A mechanism is strategyproof if no student gains by
reporting a preference different from her true one.
\end{definition}

%%%MY clarify the relation with Goto et al. 
An SPR belongs to a general class of problems,
where distributional constraints satisfy a condition called 
\emph{heredity}\footnote{Heredity means
  that if matching $Y$ is feasible, then any of its subsets are also
  feasible. An SPR satisfies this property.}
\citep{Goto:AEJ-micro:2016}.
 Two general strategyproof mechanisms exist in this context \citet{Goto:AEJ-micro:2016}.
\emph{First}, Serial Dictatorship (SD) obtains a Pareto efficient
(thus also nonwasteful) matching.
{%
SD matches students one by one, based on a fixed ordering.
Let $Y$ denote the current (partial) matching.
For next student $s$ from the fixed order, SD chooses $(s,p) \in X$ and add it to
$Y$, where $p$ is her most preferred project
  s.t. $Y\cup\{(s,p)\}$ is
feasible with some allocation $\mu'$.}
  % which is constructed
  % following a fixed priority on the students, all of whom decide
  % their most
  % preferred project that is still feasible
  % {(with some allocation)}.
  Unfortunately,
% as described later,
SD is computationally expensive\footnote{%
  It requires to solve \textsc{SPR/FA} (see below) $O(|X|)$ times.}
and unfair.
% requires to verify feasibility -- an NP-complete problem
% (Theorem~\ref{thm:feasibility} below)
% -- $O(nm)$ times.
% Hence \textsc{SPR/Nw/Find} is in class
% $\text{FP}^{\text{NP}}[\text{poly}]$.
%%% MY I assume the above sentence is not the consequence of SD, 
%%% since SD is just one mechanism that achieves NW. 
%
\emph{Second}, Artificial Caps Deferred Acceptance
(ACDA) obtains a fair matching
%%% \cite[Th. 2]{Goto:AEJ-micro:2016}
in polynomial-time. The idea is to fix a resource allocation $\mu$
 and
 run the well-known Deferred Acceptance
(DA)~\citep{Gale:AMM:1962}.
In DA, each student first applies to her most preferred project.
Then each project deferred accepts applicants up to
its capacity limit based on its preference and the rest of the students are rejected.
Then a rejected student applies to her
second choice, and so on.\footnote{%
Each project deferred accepts
applying students, without distinguishing newly applied and
already deferred accepted students.}
%%%MY add description of DA here. 
However, ACDA is 
inefficient since {$\mu$ is chosen independently
  from students' preferences.}

%%%MY modified the description of ACDA. Also, add ADA. 
%%% Both SD and ACDA are strategy-proof.
%
\begin{example}\label{counterex:stable}
Nonwastefulness and fairness are incompatible
since there exists an instance with no stable matching.
Let us show a simple example with two students $s_a, s_b$,
two projects $p_a, p_b$, and a unitary resource compatible with both.
Students' preferences are $p_a\succ_{s_a}p_b$ and $p_b\succ_{s_b}p_a$.
Projects' are $s_b\succ_{p_a}s_a$ and $s_a\succ_{p_b}s_b$.
By symmetry, assume the resource is
allocated to $p_a$. From fairness,
$s_b$ must be allocated to $p_a$. Then $(s_b, p_b)$ becomes
a claiming pair.\footnote{We use this example
    as a building block in the next section.}
\end{example}
%In an SPR,
%nonwastefulness and fairness are incompatible
%since there exists an instance with no stable matching.
%Let us show a simple example with two students $s_a, s_b$,
%two projects $p_a, p_b$, and a unitary resource compatible with both.
%Students' preferences are $p_a\succ_{s_a}p_b$ and $p_b\succ_{s_b}p_a$.
%Projects' are $s_b\succ_{p_a}s_a$ and $s_a\succ_{p_b}s_b$.
%By symmetry, assume the resource is
%allocated to $p_a$. From fairness,
%$s_b$ must be allocated to $p_a$. Then $(s_b, p_b)$ becomes
%a claiming pair.\footnote{We use this example
%    as a building block in the next section.}
%%% MY use s_1, p_1, etc. to be consistent with the model and other parts.

%%%
%%% Local Variables:
%%% TeX-master: "MAIN.tex"
%%% End:

\section{The Complexity of SPR}

In this section, we study the computational complexity of the problems defined below.

\noindent\begin{definition}[Computational problems]~
\begin{itemize}
\item \textsc{SPR/FA}:
Given an SPR instance and a matching $Y$,
does an allocation $\mu$ exist such that $(Y,\mu)$ is a feasible matching?
\item \textsc{SPR/Nw/Verif}:
Given an SPR instance and a feasible matching $(Y,\mu)$, is it nonwasteful? 
\item \textsc{SPR/Nw/Find}: 
Given an SPR instance, find a nonwasteful matching $(Y,\mu)$. 
\item \textsc{SPR/Stable/Verif}: 
Given an SPR instance and a feasible matching,
is it stable? 
\item \textsc{SPR/Stable/Exist}: 
Given an SPR instance, does a stable matching exist?
\end{itemize}
\end{definition}

\begin{reminder}[Computational Complexity]
We assume the following common knowledge: 
(decision) problem, length function, 
classes P, NP, complementation,
hardness and completeness.  
An SPR instance has length $\Theta(|S||P|+|P||R|)$.

A number problem is said \emph{strongly} hard if its hardness holds even when restricting to instances whose numbers are polynomially bounded.
For instance, NP-complete problem $\textsc{Partition}$ (as well as $\textsc{SubsetSum}$ or $\textsc{Knapsack}$) admits an algorithm polynomial in its largest number; hence, it is not strongly hard.
However, problem $\textsc{4-Partition}$ is NP-hard even when its numbers are polynomially bounded \cite{garey1979computers}. Therefore, it is a \emph{strongly} NP-hard problem.

While a decision problem only allows for one $\{0,1\}$ (no/yes) output, 
a function problem allows for an entire $\{0,1\}$-word (hence, any finite discrete object, or w.l.o.g. an integer).
A function problem in class $\text{FP}^\text{NP}[\text{poly}]$ (resp. $\text{FP}^\text{NP}[\text{log}]$) can be solved by a polynomial (resp. logarithmic) number of calls to an NP-oracle.
Typically, any optimization problem whose decision version (whether a solution better than a threshold exists) is in NP, is in $\text{FP}^\text{NP}[\text{poly}]$ or $\text{FP}^\text{NP}[\text{log}]$: one finds the optimum by a binary search that calls the decision version. It is usually polynomial in the number of bits for numbers, but when instances have no numbers then the binary search is typically logarithmic.
Hardness in these classes is induced by metric reductions from function problem $\Pi$ to $\Pi'$,
where finding the output for $\Pi'$ in polynomial time provides the output for $\Pi$ in polynomial time.

Class NP is the class of problems whose yes-instances admit a certificate (e.g. a solution) that can be verified in polynomial-time. When the verification procedure requires an NP-oracle, the problem is in class $\text{NP}^\text{NP}$. Class coNP (resp. $\text{coNP}^\text{NP}$) is the complement of class NP (resp. $\text{NP}^\text{NP}$).
\end{reminder}

Let us start by simply observing how brute-force methods depend on the parameters of these problems.
At first glance, there are $O(|P|^{|S|})$ matchings and $O(|P|^{|R|})$ resource allocations.
Whether a matching $Y$ is feasible by some allocation can be decided using dynamic programming 
on subproblems $T_k(\kappa_1,\ldots,\kappa_{|P|})\in\{\text{false},\text{true}\}$ (for integers $0\leq k\leq |R|$ and $0\leq \kappa_p\leq |S|$) which ask whether some allocation can provide $\kappa_p$ seats for each $p\in P$, using only resources $\{r_1,\ldots,r_k\}$.
There are $O\left(|R||S|^{|P|}\right)$ subproblems.
Each subproblem can be solved in time $O(|P|)$ by the following recurrence. First, 
%$T_0(\bm{\kappa})=\left\{\begin{array}{l} \text{true if }\bm{\kappa}\equiv 0,\\ \text{false otherwise}\end{array}\right.$, 
$T_0(\bm{\kappa})=\left\{\text{true if }\bm{\kappa}\equiv 0, \text{false otherwise}\right\}$, 
and second, for $k>0$ and $\bm{\kappa}\in[0,n]^p$, $T_k(\bm{\kappa})=\bigvee\nolimits_{{i=1}\mid{\kappa_i\geq q_{r_k}}}^{p} T_{k-1}(\kappa_1,\ldots,\kappa_i-q_{r_k},\ldots,\kappa_p)$, both hold.
Therefore, the dynamic program takes time $O(|S|^{|P|}|P||R|)$, including a last iteration that queries for an allocation with at least the required numbers of seats (rather than exactly). 
Consequently:
\begin{itemize}
\item \textsc{SPR/FA} can be decided in time $O(|S|^{|P|}|P||R|)$,
\item \textsc{SPR/Nw/Find} can be solved in time $O(|S|^{|P|+1}|P|^2|R|)$ by mechanism SD, and
\end{itemize}
these two problems are XP-tractable with respect to parameter $|P|$ (while stability seems harder to decide).
A verification problem \textsc{SPR/FA} decidable in polynomial-time would contain our problems to class NP. However, this is not the case in general, as the following theorem shows.

\begin{theorem}
\label{thm:feasibility}
\textsc{SPR/FA} is NP-complete.
\end{theorem}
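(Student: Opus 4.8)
The plan is to show \textsc{SPR/FA} is in NP and NP-hard. Membership in NP is immediate: given an SPR instance and a matching $Y$, a certificate is an allocation $\mu:R\to P$; one checks in polynomial time that $\mu(r)\in T_r$ for every $r$ and that $|Y(p)|\le q_\mu(p)$ for every $p$. So the real work is hardness. I would reduce from a bin-packing-flavored problem — the most natural choice is \textsc{3-Partition} or \textsc{Bin Packing}, but since the theorem does not claim \emph{strong} hardness, the cleanest source is \textsc{Partition} itself: given positive integers $a_1,\dots,a_n$ with $\sum_i a_i = 2B$, decide whether some subset sums to $B$.

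The construction: create two projects $p_1,p_2$, and for each integer $a_i$ a resource $r_i$ with capacity $q_{r_i}=a_i$ that is compatible with both projects, i.e. $T_{r_i}=\{p_1,p_2\}$. Then introduce enough students and set projects' preferences so that the target matching $Y$ assigns exactly $B$ students to $p_1$ and exactly $B$ students to $p_2$ (taking $2B$ students total, all acceptable to the relevant project; students' own preferences are irrelevant to feasibility and can be arbitrary). Now $(Y,\mu)$ is feasible iff the allocation $\mu$ sends a sub-multiset of resources summing to at least $B$ to $p_1$ and the complementary multiset, summing to at least $2B-B=B$, to $p_2$; since the two bounds sum to exactly $2B=\sum_i q_{r_i}$, both inequalities must be tight, which happens iff the $a_i$ admit a partition into two halves each summing to $B$. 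Thus $Y$ is feasible iff the \textsc{Partition} instance is a yes-instance, and the reduction is clearly polynomial.

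The only subtlety to check is that the instance we build is a legitimate SPR instance under Definition~\ref{def:SPRinstance}: capacities $q_{r_i}=a_i$ are positive integers (fine, as $a_i>0$), the compatibility sets are nonempty, and we must supply responsive/separable extensions of the projects' preferences to $2^S$ — but since every student is acceptable to her project and preferences over singletons can be fixed arbitrarily, the standard responsive extension works and imposes no further constraint, because feasibility only refers to cardinalities $|Y(p)|$, not to which students are chosen. I expect no real obstacle here; the main point to get right is the "sum of bounds equals total capacity forces tightness" argument, which is what turns an a~priori inequality-feasibility question into an exact-partition question. (If one wanted \emph{strong} NP-hardness as well, one would instead reduce from \textsc{3-Partition}, using $3n$ resources, $n$ projects each needing exactly the per-bin cardinality many students, and the same tightness argument — but for the stated theorem, reduction from \textsc{Partition} suffices.)
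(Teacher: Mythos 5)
There is a genuine gap: your reduction from \textsc{Partition} is not polynomial-time. In an SPR instance, students are listed explicitly (the instance has length $\Theta(|S||P|+|P||R|)$), so the number of students you create is part of the output size of the reduction. Your construction needs $2B$ students, where $B=\frac{1}{2}\sum_i a_i$; since the integers $a_i$ of a \textsc{Partition} instance are encoded in binary, $B$ can be exponential in the length of the \textsc{Partition} instance, so you would be writing down exponentially many students. The paper makes exactly this point right after Theorem~\ref{thm:feasibility}: ``a similar construction from \textsc{Partition} with two projects would require exponentially many students, hence the reduction would not be polynomial.'' This is why the strong NP-hardness of the source problem is not an optional bonus here but a prerequisite: it is what lets one assume the numbers (and hence the number of students, $m\theta$ in the paper's construction) are polynomially bounded.

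The fix is the one you relegate to your final parenthesis: reduce from \textsc{3-Partition} or, as the paper does, from \textsc{4-Partition}, with $m$ projects, $\theta$ students matched to each, and resources identified with the weights $w_i$, compatible with all projects. Your central ``total capacity equals total demand, so all inequalities are tight'' argument is exactly the right correspondence and is the same idea the paper uses; it just has to be run on a strongly NP-hard number problem so that the student population stays polynomial. The NP-membership part of your proposal is correct and matches the paper.
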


\begin{proof}
Since an allocation $\mu$ that makes $(Y,\mu)$ a feasible matching is an efficiently verifiable certificate for yes-instances, \textsc{SPR/FA} belongs to NP.
For hardness, any instance of \textsc{4-Partition}, defined by positive integers multiset $W=\{w_1,\ldots,w_{4m}\}$ and target $\theta\in\mathbb{N}$ (with $\sum_{w\in W}w=m\theta$ and $\forall i\in[4m],\frac{\theta}{5}<w_i<\frac{\theta}{3}$) is reduced to an instance of \textsc{SPR/FA} with $m$ projects $p_1,\ldots,p_m$. 
In the given matching $Y$, $\theta$ students are matched to each project.
Resources $R$ are identified with weights $W$:
$q_R=(w_1,\ldots,w_{4m})$ and $T_r= P$ for every $r\in R$. 
The correspondence is straightforward between a partition of $W$ into $m$ subsets of size $4$ that hit $\theta$, and an allocation with capacity for $\theta$ students on $m$ projects (hence $4$ resources per project).
Crucially, since \textsc{4-Partition} is NP-hard even if its integers are polynomially bounded, so is the number of students and the reduction is polynomial. 
\end{proof}

%
%By Th. 1, verifying whether a matching is feasible, is already NP-complete.

Intricate complexity results follow from the hardness of feasibility.\footnote{It tends to push problems to be strictly harder than NP.}
Also, in Th. \ref{thm:feasibility}, the \emph{strong} NP-hardness of \textsc{4-Partition} is necessary: 
a similar construction from \textsc{Partition} with two projects would require exponentially many students, hence the reduction would not be polynomial.
Therefore, we need to create \textsc{ParetoPartition} and \textsc{$\forall\exists$-4-Partition} and show them \emph{strongly} hard, so our reductions have polynomially many students.

\noindent\begin{definition}[New fundamental problems]~
\begin{itemize}
\item \textsc{ParetoPartition}:\\
Given positive integer multiset $W=\{w_1,\ldots,w_{|W|}\}$, a number $m\in\mathbb{N}$ of desired subsets, and target $\theta\in\mathbb{N}$, any partition of $W$ into a list $V_1,\ldots,V_{m}$ of $m$ subsets is mapped to deficit vector $\bm{\delta}\in\mathbb{Z}^m$ that is defined for every\footnote{$[m]$ is shorthand of $\{1, \ldots, m\}$.} $i\in[m]$ by: 
$$\delta_i=\min\left\{w(V_i)-\theta,0\right\},$$ 
where $w(V_i)=\sum_{w\in V_i}w$. 
(Subset $V_i$ has negative deficit if it sums below $\theta$, and deficit zero if it surpasses $\theta$.)
The problem is to find one partition of $W$ into $m$ subsets whose deficit vector $\bm{\delta}$ is Pareto
efficient\footnote{Given two vectors $\delta,\delta'\in\mathbb{Z}^m$,
  vector $\delta$ Pareto dominates $\delta'$ if and only if: $\forall
  i\in[m],\delta_i\geq\delta'_i$ and $\exists
  i\in[m],\delta_i>\delta'_i$.
  For a set of vectors $\Delta$ and $\delta \in \Delta$,
    $\delta$ is 
  Pareto efficient in $\Delta$ when no other vector $\delta' \in
  \Delta$
  Pareto dominates it.}
within the deficit vectors of all partitions of $W$.
\item \textsc{$\forall\exists$-4-Partition}:\\
Given target $\theta\in\mathbb{N}$, list of integers $W=(w_1,\ldots,w_{4m})$ s.t. $\frac{\theta}{5}\!<\!w_i\!<\!\frac{\theta}{3}$ and list of disjoint couples $\mathcal{L}=(u_1,v_1),\ldots,(u_\ell,v_\ell)$ from $W$, for map $\sigma:[\ell]\rightarrow\{0,1\}$, a partition of $W$ into $m$ subsets $V_1,\ldots,V_m$ is $\sigma$-satisfying if and only if:
\begin{itemize}
\setlength{\itemsep}{0.25em}
\item $\forall i\in[m]$, $|V_i|=4$ and $w(V_i)=\theta$, 
\item $\forall i\in[\ell]$, $u_i\in V_i$ \enskip and\enskip $\forall i\in[\ell]$, $v_i\in V_i$ if and only if $\sigma(i)=1$. 
%$u_j$ and $v_j$ share same subset if only if $\sigma(j)=1$.\footnote{Otherwise, when $\sigma(j)=0$, $u_j$ and $v_j$ \emph{must} be in different subsets.}
%\item for every $j\in[\ell]$:\\
%if $\sigma(j)=1$ then $u_j$ and $v_j$ are in the same subset,\\
%if $\sigma(j)=0$ then $u_j$ and $v_j$ are in different subsets.
\end{itemize}
(Thus, $u_i$ and $v_i$ are together in $V_i$ if and only if $\sigma(i)=1$.)
The question is: Does, for every map $\sigma:[\ell]\rightarrow\{0,1\}$, a $\sigma$-satisfying partition of $W$ into $m$ subsets exist?
\end{itemize}
\end{definition}

\subsection{The Complexity of Nonwastefulness}

Here we first show that there is no natural verification procedure that would make computing a nonwasteful matching\footnote{whose existence is guaranteed by mechanism SD} belong to NP. Indeed, we then show that \textsc{SPR/Nw/Find} is $\text{FP}^{\text{NP}}[\text{log}]$-hard: one can embed a logarithmic number of calls to SAT in a single call to \textsc{SPR/Nw/Find}, which is strictly harder than NP. 

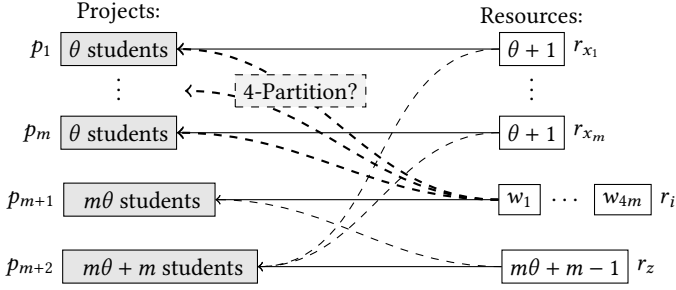
\begin{figure}[t]
\centering
\begin{tikzpicture}[scale=1.1, every node/.style={scale=0.9}]
\node at (0,-0.8) [rectangle,draw, fill=black!10] (p1) {$\theta$ students};
\node at (0,-1.2) [] (p2) {$\vdots$};
\node at (0,-1.8) [rectangle,draw, fill=black!10] (pm) {$\theta$ students};
\node at (0.25,-2.6) [rectangle,draw, fill=black!10] (pm1) {\enskip $m\theta$ students \quad~};
\node at (0.5,-3.4) [rectangle,draw, fill=black!10] (pm2) {\enskip $m\theta+m$ students \quad~};
\node[above = 0cm of p1] {Projects:};
\node[left = 0cm of p1] {$p_1$};
\node[left = 0cm of pm] {$p_m$};
\node[left = 0cm of pm1] {$p_{m+1}$};
\node[left = 0cm of pm2] {$p_{m+2}$};
\node at (5,-0.8) [rectangle,draw] (r1) {$\theta+1$};
\node at (5,-1.2) [] (r2) {$\vdots$};
\node at (5,-1.8) [rectangle,draw] (rm) {$\theta+1$};
\node at (4.85,-2.6) [rectangle,draw] (rm11) {$w_1$};
\node at (5.4,-2.6) [] (rm12) {$\ldots$};
\node at (6.1,-2.6) [rectangle,draw] (rm1m) {$w_{4m}$};
\node at (5.4,-3.4) [rectangle,draw] (rz) {$m\theta+m-1$};
\node[above = 0cm of r1] {Resources:};
\node[right = 0cm of r1] {$r_{x_1}$};
\node[right = 0cm of rm] {$r_{x_m}$};
\node[right = 0cm of rm1m] {$r_i$};
\node[right = 0cm of rz] {$r_z$};
\draw[->] (r1) -- (p1);
\draw[->] (rm) -- (pm);
\draw[->] (rm11) -- (pm1);
\draw[->] (rz) -- (pm2);
\draw[->,dashed,thin] (r1) to [out = 180, in = 0] (pm2);
\draw[->,dashed,thin] (rm) to [out = 180, in = 0] (pm2);
\draw[->,dashed,thick] (rm11) to [out = 180, in = 0] (p1);
\draw[->,dashed,thick] (rm11) to [out = 180, in = 0] (0.8,-1.3);
\draw[->,dashed,thick] (rm11) to [out = 180, in = 0] (pm);
\draw[->,dashed,thin] (rz) to [out = 180, in = 0] (pm1);
\node at (2.2,-1.3) [rectangle,draw,dashed,fill=black!05] {4-Partition?};
\end{tikzpicture}
\caption{Reducing \textsc{4-Partition} to \textsc{SPR/Nw/Verif}. Students specified in project boxes are the students that are acceptable for each project. While the horizontal resource allocation makes almost all capacity requirements feasible, one more student can be matched to $p_{m+2}$ if and only if the dashed resource allocation (with a solution to \textsc{4-Partition}) is feasible.}\label{fig:1}
\end{figure}

\begin{theorem}
\textsc{SPR/Nw/Verif} is coNP-complete,
even if each student only has one acceptable project.
\end{theorem}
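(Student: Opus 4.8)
The plan is to show membership in coNP and coNP-hardness separately. For membership, observe that a feasible matching $(Y,\mu)$ is \emph{wasteful} precisely when there exists a claiming pair $(s,p)$, and such a pair is an efficiently verifiable certificate: given $(s,p)$, one checks in polynomial time that $p\succ_s Y(s)$, and then one needs to verify that $Y\setminus\{(s,Y(s))\}\cup\{(s,p)\}$ is feasible with \emph{some} allocation $\mu'$. But checking feasibility of a matching is exactly \textsc{SPR/FA}, which Theorem~\ref{thm:feasibility} shows is NP-complete, so this verification itself requires an NP-oracle — naively this would only place the complement in $\text{NP}^{\text{NP}}$. The key observation that saves us is the restriction in the statement: if \emph{each student has only one acceptable project}, then there is no choice in where $s$ can move, so $Y(s)=\emptyset$ for any $s$ that could be part of a claiming pair, and more importantly the new matching differs from $Y$ by adding a single contract. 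Hardness of \textsc{SPR/FA} came from having to re-partition resources from scratch; here I would argue that in the gadget instances the relevant feasibility query stays polynomial, OR — cleaner — I would make the certificate of wastefulness include both the pair $(s,p)$ \emph{and} the witnessing allocation $\mu'$, so the certificate is polynomial-size and polynomial-time checkable, giving membership in coNP directly. (That the certificate $\mu'$ is polynomial-size is immediate since an allocation is just a map $R\to P$.)

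For coNP-hardness, I would reduce from the complement of \textsc{4-Partition}, following Figure~\ref{fig:1}. Given a \textsc{4-Partition} instance $(W=\{w_1,\dots,w_{4m}\},\theta)$, build an SPR instance with projects $p_1,\dots,p_m$ (each with $\theta$ students matched to it in $Y$, and exactly those $\theta$ students acceptable), a project $p_{m+1}$ with $m\theta$ students acceptable and matched, and a project $p_{m+2}$ with $m\theta+m$ acceptable students of which $m\theta+m-1$ are matched in $Y$. The resources are: one resource $r_{x_i}$ of capacity $\theta+1$ compatible with $\{p_i, p_{m+2}\}$ for each $i\in[m]$; resources $r_1,\dots,r_{4m}$ of capacities $w_1,\dots,w_{4m}$ compatible with $\{p_1,\dots,p_m,p_{m+1}\}$ (the "4-Partition" resources); and one resource $r_z$ of capacity $m\theta+m-1$ compatible with $\{p_{m+1},p_{m+2}\}$. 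The "horizontal" allocation $\mu$ (each $r_{x_i}\to p_i$, each $w$-resource to $p_{m+1}$, $r_z\to p_{m+2}$) makes $Y$ feasible. The crux: there is a claiming pair involving the one unmatched student of $p_{m+2}$ (who finds $p_{m+2}$ acceptable and has $Y(s)=\emptyset$) \textbf{if and only if} one more seat can be freed up at $p_{m+2}$, which — after tracing through the forced reallocations ($r_z$ must go to $p_{m+2}$ to reach $m\theta+m$; then $p_{m+1}$ loses its $r_z$, so its $m\theta$ students must be covered by the $w$-resources alone; but $\sum w_i = m\theta$ exactly, so all $w$-resources go to $p_{m+1}$; then $p_1,\dots,p_m$ lose their $w$-resources and must be covered by the $r_{x_i}$'s alone, which is fine, freeing nothing new) — one sees the slack must instead come from using the $r_{x_i}$'s cleverly: $r_z\to p_{m+2}$ gives $m\theta+m-1$, need one more, so one $r_{x_i}\to p_{m+2}$, but then $p_i$ needs $\theta$ seats from $w$-resources, and covering all of $p_1,\dots,p_m$ this way requires partitioning $W$ into $m$ groups each summing to $\ge\theta$, i.e. exactly $=\theta$ by the sum constraint, i.e. a \textsc{4-Partition} solution (the $\theta/5 < w_i < \theta/3$ bound forces exactly four resources per group). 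Hence $(Y,\mu)$ is \emph{wasteful} iff the \textsc{4-Partition} instance is a \emph{yes}-instance; equivalently $(Y,\mu)$ is \emph{nonwasteful} iff \textsc{4-Partition} is a \emph{no}-instance, giving coNP-hardness. Since \textsc{4-Partition} is strongly NP-hard, the number of students is polynomially bounded and the reduction is polynomial.

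The main obstacle is getting the gadget's arithmetic exactly right so that the \emph{only} way to create one extra seat at $p_{m+2}$ routes through a genuine \textsc{4-Partition} solution, with no "accidental" reallocation (e.g., moving two $r_{x_i}$'s, or mixing) producing slack for a different reason. I would handle this by carefully choosing capacities so that: (i) $p_{m+2}$ can hold at most $m\theta+m$ students total even using every compatible resource ($r_z$ plus all $r_{x_i}$'s sum to exactly that, and those are the only compatible resources), so at most one unit of slack beyond the matched $m\theta+m-1$ is ever available and it is "all-or-nothing" per $r_{x_i}$; (ii) $p_{m+1}$'s requirement $m\theta$ together with $\sum w_i=m\theta$ pins every $w$-resource to $p_{m+1}$ unless $r_z$ is available to it — forcing $r_z\to p_{m+2}$; and (iii) each $p_i$ ($i\le m$) losing $r_{x_i}$ then needs its $\theta$ seats from $W$-resources alone. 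A secondary subtlety is confirming the constructed $(Y,\mu)$ has no \emph{other} claiming pair (the students of $p_1,\dots,p_{m+1}$ only find their own project acceptable, so none of them can claim, and a matched student of $p_{m+2}$ already has $Y(s)=p_{m+2}$, her only acceptable project); this is routine once the preference/acceptability lists in Figure~\ref{fig:1} are spelled out. I would also double-check the degenerate case $m=1$ separately.
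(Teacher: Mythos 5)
Your membership argument (after the digression about $\text{NP}^{\text{NP}}$, you settle on the certificate consisting of the claiming pair together with the witnessing allocation $\mu'$) and your hardness construction are exactly the paper's: the same projects, students, capacities and compatibilities as in Figure~\ref{fig:1}, and the same target equivalence ``$(Y,\mu)$ has a claiming pair iff the \textsc{4-Partition} instance is a yes-instance''. So the approach is the right one, and the strong NP-hardness remark at the end is also the paper's.

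However, the justification you give for the crucial equivalence is wrong as written. You repeatedly keep $r_z$ at $p_{m+2}$ in the witnessing reallocation (``$r_z\to p_{m+2}$ gives $m\theta+m-1$, need one more, so one $r_{x_i}\to p_{m+2}$\,\dots''), and your point (ii) even concludes ``forcing $r_z\to p_{m+2}$''. But \emph{any} allocation with $\mu'(r_z)=p_{m+2}$ is infeasible for the augmented matching: the only other resources compatible with $p_{m+1}$ are the $w$-resources, whose total is exactly $m\theta$, so all of them are pinned to $p_{m+1}$, and then whichever $p_i$ surrendered its $r_{x_i}$ has no compatible resource left (while if no $r_{x_i}$ moves, $p_{m+2}$ is one seat short). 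Your point (i) is also arithmetically false: $r_z$ plus the $r_{x_i}$'s sum to $2m\theta+2m-1$, not $m\theta+m$. The forced structure is the opposite of what you trace, namely the paper's dashed allocation: $r_z$ must move to $p_{m+1}$ (freeing the $w$-resources), all $m$ resources $r_{x_i}$ must go to $p_{m+2}$ (their capacities sum to exactly the required $m\theta+m$), and each $p_i$ must then receive $\theta$ worth of $w$-resources, i.e., a partition of $W$ into $m$ blocks of sum exactly $\theta$, hence of size $4$ by the bounds $\theta/5<w_i<\theta/3$. Note also that the allocation you exhibit for the forward direction leaves $p_{m+1}$ uncovered, so it does not even establish ``\textsc{4-Partition} yes $\Rightarrow$ claiming pair''. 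The repair is a short two-case analysis on $\mu'(r_z)\in\{p_{m+1},p_{m+2}\}$, but as written the crux step of your hardness argument does not go through.
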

\begin{proof}
Claiming pair $(s,p)$ and allocation $\mu'$ that makes it feasible are efficiently verifiable no-certificates. Hence, \textsc{SPR/Nw/Verif} is in coNP.
To show coNP-hardness, any instance $W=\{w_1,\ldots,w_{4m}\}$ of \textsc{4-Partition} with target $\theta$ (assuming $\sum_{w\in W} w = m\theta$ and $\frac{\theta}{5}<w_i<\frac{\theta}{3}$) is reduced to the following co-instance, whose yes-answers are for existent claiming pairs (see Fig. \ref{fig:1}). 
There are $m+2$ projects.
For $i\in[m]$, $\theta$ students only consider $p_i$ acceptable,
$m\theta$ students only consider $p_{m+1}$ acceptable,
and $m\theta+m$ only consider $p_{m+2}$ acceptable.
Projects also rank the corresponding students acceptable, arbitrarily.
In matching $Y$, all students are matched except one student $s^\ast$, who wanted $p_{m+2}$.
In allocation $\mu$, for every $i\in[m]$, project $p_i$ receives resource $r_{x_i}$ with capacity $q_{r_{x_i}}=\theta+1$ and $T_{r_{x_i}}=\{p_i,p_{m+2}\}$.
Project $p_{m+1}$ receives $4m$ resources $r_i$ identified with integer set $W=\{w_1,\ldots,w_{4m}\}$: for every $i\in[4m]$, resource $r_i$ has capacity $q_{r_{i}}=w_i$ and $T_{r_{i}}=\{p_i\mid i\in[m+1]\}$.
Project $p_{m+2}$ receives resource $r_z$ with capacity $q_{r_z}=m\theta+m-1$ and $T_{r_z}=\{p_{m+1},p_{m+2}\}$.
Since integers $w_i$ and $\theta$ are polynomially bounded, so is the number of students, and the reduction is polynomial-time.
There exists a solution $V_1,\ldots,V_m$ to \textsc{4-Partition} if and only if allocation  $\mu'$ (dashed in Figure \ref{fig:1}) is feasible, i.e. $(s^\ast,p_{m+2})$ is a claiming pair.
%: $\nu^{-1}(p_i)\equiv V_i$ for $i\in[m]$, $\nu^{-1}(p_{m+1})=\{r_z\}$ and $\nu^{-1}(p_{m+2})=\{r_{x_i}\mid i\in [m]\}$. It provides capacities $q_{\nu}(p_1)=\ldots = q_{\nu}(p_m)=\theta$, $q_{\nu}(p_{m+1})=m\theta+m-1$  and $q_{\nu}(p_{m+2})=m\theta+m$, 
%
%(yes$\Rightarrow$yes) If there is a solution $V_1,\ldots,V_m$, then $(s^\ast,p_{m+2})$ is a claiming pair, as allocation  $\nu$ (dashed in Fig. \ref{fig:1}) shows: $\nu^{-1}(p_i)\equiv V_i$ for $i\in[m]$, $\nu^{-1}(p_{m+1})=\{r_z\}$ and $\nu^{-1}(p_{m+2})=\{r_{x_i}\mid i\in [m]\}$. Indeed, allocation $\nu$ provides capacities $q_{\nu}(p_1)=\ldots = q_{\nu}(p_m)=\theta$, $q_{\nu}(p_{m+1})=m\theta+m-1$  and $q_{\nu}(p_{m+2})=m\theta+m$.
%
%(yes$\Leftarrow$yes) The only possible claiming pair is $(s^\ast,p_{m+2})$. The only way matching $Y\cup\{(s^\ast,p_{m+2})\}$ is feasible, is when $\nu^{-1}(p_{m+2})=\{r_{x_i}\mid i\in[m]\}$, $\nu^{-1}(p_{m+1})=\{r_z\}$ and projects $p_i$ for $i\in[m]$ use resources $r_i$ for $i\in[4m]$ in a perfectly balanced manner $q_{\nu}(p_1)=\ldots=q_{\nu}(p_m)=\theta$. Hence, $V_i\equiv\nu^{-1}(p_i)$ for $i\in[m]$ is a solution for \textsc{4-Partition}.
\end{proof}

\begin{theorem}
\textsc{SPR/Nw/Find} belongs to $\text{FP}^{\text{NP}}[\text{poly}]$ and is $\text{FP}^{\text{NP}}[\text{log}]$-hard,
even if each student only has a single acceptable project.
\end{theorem}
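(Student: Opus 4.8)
The plan is to establish membership and hardness separately. For \emph{membership in} $\text{FP}^{\text{NP}}[\text{poly}]$, I would observe that mechanism SD (Serial Dictatorship) produces a nonwasteful matching, so it suffices to simulate SD with an NP-oracle. At each of the $O(|X|)$ steps, SD needs to find, for the current student $s$, her most preferred project $p$ such that $Y\cup\{(s,p)\}$ is feasible with some allocation; the feasibility check ``does there exist $\mu'$ with $(Y',\mu')$ feasible'' is exactly \textsc{SPR/FA}, which is in NP by Theorem~\ref{thm:feasibility}. Scanning $s$'s preference list from the top and querying the oracle once per candidate project gives a total of $O(|X|\cdot|P|)$, hence polynomially many, NP-oracle calls, and each call has polynomial size. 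This places \textsc{SPR/Nw/Find} in $\text{FP}^{\text{NP}}[\text{poly}]$. (I would also note, as a remark, that \textsc{ParetoPartition} is claimed $\text{FP}^{\text{NP}}[\text{poly}]$-complete, so presumably the reduction below can be strengthened, but for the stated theorem only $\text{FP}^{\text{NP}}[\text{log}]$-hardness is needed.)

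For \emph{hardness}, I would give a metric reduction from a canonical $\text{FP}^{\text{NP}}[\text{log}]$-hard problem — the natural candidate being the problem of computing the lexicographically-relevant bits of an optimum, or equivalently the strongly-hard version of \textsc{ParetoPartition} promised in the abstract — into \textsc{SPR/Nw/Find}. The key structural idea is to reuse the gadget of Figure~\ref{fig:1}: the ``$m\theta+m-1$'' bottleneck resource $r_z$ on $p_{m+2}$ together with the $w_i$-resources on $p_{m+1}$ encode a \textsc{4-Partition}-style packing question, and whether the last student $s^\ast$ can be accommodated at $p_{m+2}$ in a nonwasteful matching is controlled by whether the remaining resource mass can be packed to leave exactly the right slack. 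To get $\text{FP}^{\text{NP}}[\text{log}]$-hardness rather than just coNP-hardness, I would stack $O(\log)$ independent copies (or one copy with logarithmically many ``indicator'' students $s^\ast_1,\dots,s^\ast_k$, each gated by a slightly different bottleneck capacity) so that the nonwasteful matching $Y$ returned by any algorithm reveals, in the match-status of each $s^\ast_j$, the answer to the $j$-th of a logarithmic batch of independent SAT/\textsc{4-Partition} queries. Reading off these bits in polynomial time from $Y$ then solves the original $\text{FP}^{\text{NP}}[\text{log}]$ problem, completing the metric reduction. The restriction that each student has only one acceptable project is preserved because every $s^\ast_j$ and every filler student in the construction is single-choice, exactly as in Figure~\ref{fig:1}.

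\textbf{Main obstacle.} The delicate point is arranging the gadget so that the \emph{nonwasteful} matching is \emph{forced} to expose the oracle answers, rather than merely \emph{permitted} to. In SPR, nonwastefulness says $s^\ast_j$ must be moved to a preferred project whenever \emph{some} reallocation $\mu'$ makes room; so I need the capacities set up so that $s^\ast_j$'s only preferred project is $p_{m+2}$-type, that there is genuine slack for $s^\ast_j$ \emph{iff} the $j$-th packing instance is a yes-instance, and — crucially — that accommodating $s^\ast_j$ in one copy cannot ``steal'' resources that another copy needs, i.e. the copies must be resource-disjoint or share only compatibility sets in a way that keeps the $k$ feasibility questions independent. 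Ensuring this independence while keeping all integers polynomially bounded (so the reduction stays polynomial-time, exploiting the \emph{strong} hardness of \textsc{4-Partition}/\textsc{ParetoPartition}) is the technical heart of the argument; once the copies are certifiably independent, correctness of the metric reduction follows from the same if-and-only-if analysis already used in the \textsc{SPR/Nw/Verif} proof, applied copy by copy.
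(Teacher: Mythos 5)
Your membership argument (simulate SD, using the fact that each feasibility check is an instance of \textsc{SPR/FA}, hence one NP-oracle call, over polynomially many steps) is exactly the paper's, and is fine. The hardness plan, however, has a gap at its core: embedding a logarithmic batch of \emph{independent} (i.e., nonadaptively generated) SAT/\textsc{4-Partition} queries into a single call to \textsc{SPR/Nw/Find} does not establish $\text{FP}^{\text{NP}}[\text{log}]$-hardness. That class is defined by $O(\log)$ \emph{adaptive} oracle queries: the $j$-th query of such a computation depends on the answers to the previous ones, so it cannot be written down at reduction time and planted in its own gadget copy. What your construction would capture is ``polynomial time with $O(\log)$ parallel NP queries,'' which is believed strictly weaker (already for decision problems, $O(\log)$ parallel queries collapse to roughly $O(\log\log)$ adaptive ones), so a metric reduction from that task proves nothing at the $\text{FP}^{\text{NP}}[\text{log}]$ level. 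The paper sidesteps this by reducing from an \emph{optimal-value} problem, unweighted \textsc{Max3DM}, whose known $\text{FP}^{\text{NP}}[\text{log}]$-hardness already encapsulates the adaptivity (the optimum is what a binary search of adaptive queries computes), and it routes this through the new problem \textsc{ParetoPartition} (Lemma~\ref{lem:1}). You do name ``the strongly-hard version of \textsc{ParetoPartition}'' as a candidate source, but your plan never uses it; proving its strong $\text{FP}^{\text{NP}}[\text{log}]$-hardness is the bulk of the paper's argument (the $\beta$-ary digit construction with triplet/actual/dummy/value integers) and is entirely absent from the proposal.

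Second, the obstacle you explicitly defer --- forcing, not merely permitting, the returned nonwasteful matching to expose the answers --- is precisely the step that needs a construction, and the Figure~\ref{fig:1} gadget does not supply it once the matching is no longer given. For \textsc{SPR/Nw/Find} the algorithm chooses $Y$: nonwastefulness only guarantees that no single student can be added while the others stay fixed, i.e., that the induced deficit vector of the allocation is Pareto efficient (this is exactly the content of Lemma~\ref{lem:2}). In your stacked copies a nonwasteful matching may leave some filler students at the $p_i$'s unmatched and seat an indicator $s^\ast_j$ even when the $j$-th packing instance is a no-instance (or conversely), so the indicators' match statuses need not encode the intended bits, and ``reading off these bits solves the original problem'' does not follow. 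The paper's resolution is to accept that the output is only \emph{some} Pareto-efficient point and to engineer the \textsc{ParetoPartition} instance so that \emph{every} Pareto-efficient deficit vector reveals the \textsc{Max3DM} optimum; without that (or an equivalent forcing argument), the correctness of your reduction does not go through.
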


\begin{proof}
Mechanism SD shows that \textsc{SPR/Nw/Find} belongs to $\text{FP}^{\text{NP}}[\text{poly}]$. Hardness follows from Lemmas \ref{lem:1} and \ref{lem:2} below.
\end{proof}

\begin{lemma}\label{lem:1}
\textsc{ParetoPartition} is 
$\text{FP}^{\text{NP}}[\text{poly}]$-complete and 
strongly $\text{FP}^{\text{NP}}[\text{log}]$-hard.
\end{lemma}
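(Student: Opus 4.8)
The plan is to prove two separate hardness statements plus membership. For $\text{FP}^{\text{NP}}[\text{poly}]$-membership, observe that the set of deficit vectors of partitions of $W$ lives in a box of size polynomial in the input numbers, and Pareto efficiency of a candidate vector $\bm{\delta}$ (``no partition has deficit vector dominating $\bm{\delta}$'') is an NP question. So one can construct a Pareto-efficient deficit vector coordinate by coordinate, or more simply use a polynomial number of NP-oracle calls to first pin down the value of an achievable, undominated target vector and then extract a witnessing partition; this is the routine ``optimization reduces to its NP decision version'' argument, and I would only sketch it.

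For $\text{FP}^{\text{NP}}[\text{poly}]$-hardness, I would reduce from a canonical $\text{FP}^{\text{NP}}[\text{poly}]$-complete problem — the natural candidate is computing a lexicographically maximum (or otherwise canonical) satisfying assignment of a CNF formula, or the value of an optimal solution bit by bit. The key idea is to build, for a formula $\phi$ on $n$ variables, a \textsc{ParetoPartition} instance with roughly $2n$ ``slots'' (one per literal) where the deficit-vector coordinates are engineered so that a Pareto-efficient partition is forced to encode a satisfying assignment, and among satisfying assignments to be canonical/extremal in the desired coordinatewise sense. Concretely, I would use large ``blocking'' weights that can only be placed consistently (all-true or all-false per variable) without incurring deficit, a gadget per clause that forces at least one satisfied literal, and a graded sequence of small weights whose placement realizes the bits of the assignment in the deficit vector — so that Pareto-maximality in $\mathbb{Z}^m$ translates exactly into bitwise maximality of the encoded assignment. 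The numbers here are allowed to be exponential (we only need polynomially many oracle calls on the other side), so there is more freedom than in the strong-hardness part.

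For the \emph{strong} $\text{FP}^{\text{NP}}[\text{log}]$-hardness, the constraint is that all weights must be polynomially bounded. I would reduce from a problem known to be $\text{FP}^{\text{NP}}[\text{log}]$-hard under metric reductions with small numbers — e.g. computing the number of satisfiable clauses in MAX-SAT, or the size of a maximum clique/independent set — so that a single output integer of logarithmically many bits must be recovered. The trick is to encode ``how good'' a configuration is into a \emph{single} deficit coordinate (or a constant number of them) using only polynomially bounded weights: set $m$ moderately, give most projects tight targets so that any Pareto-efficient partition must be a near-perfect packing except at one designated coordinate, and let the slack at that coordinate equal (a linear function of) the optimum of the source problem. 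Because $\textsc{4-Partition}$-style gadgets with unary-size numbers can enforce ``exactly four per group summing to $\theta$'' rigidly, I can force the structure while keeping numbers polynomial, and then the remaining degrees of freedom control exactly the one coordinate whose maximum value — readable off in polynomial time from any Pareto-efficient output — equals the source optimum.

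The main obstacle I expect is the strong-hardness construction: I must simultaneously (i) keep every weight polynomially bounded, (ii) make the Pareto-efficient partitions rigid enough that they are forced into a canonical packing, and (iii) route the one remaining ``free'' quantity so that it linearly tracks the source problem's optimum rather than merely being correlated with it. Balancing rigidity against the need for a genuine degree of freedom — without accidentally creating incomparable Pareto-optimal vectors that a solver could return instead of the informative one — is the delicate point, and it is where most of the work in the actual proof will go; the $\text{FP}^{\text{NP}}[\text{poly}]$-hardness and the membership argument are comparatively standard.
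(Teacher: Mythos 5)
There is a genuine gap: what you have written is a plan, and the one thing the lemma actually requires -- a construction in which \emph{every} Pareto efficient partition is informative -- is exactly the part you leave open. A metric reduction must let you read off the source optimum from whatever Pareto efficient partition a solver happens to return; yet Pareto efficiency is only coordinatewise undominatedness, so a generic gadget design produces a large antichain of incomparable deficit vectors, most of which encode nothing useful. Your $\text{FP}^{\text{NP}}[\text{poly}]$-hardness sketch runs directly into this: you claim that ``Pareto-maximality in $\mathbb{Z}^m$ translates exactly into bitwise maximality of the encoded assignment,'' but Pareto efficiency does not single out the lexicographically maximum satisfying assignment -- two satisfying assignments whose bit-encodings trade off different coordinates give incomparable deficit vectors, and a solver may return either one, breaking the reduction. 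To kill these spurious optima you would need the coordinates to be prioritized (effectively a leximax structure), which plain Pareto efficiency does not give you; some concrete mechanism forcing all Pareto efficient vectors to coincide with the optimum-encoding ones is needed, and your proposal explicitly defers it (``it is where most of the work in the actual proof will go''). The same issue is unresolved in your strong-hardness sketch: you state the right target -- all coordinates forced to zero except one designated coordinate tracking the optimum -- but give no gadget establishing that no other Pareto efficient configurations survive.

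For comparison, the paper resolves precisely this point with a single reduction from \textsc{Max3DM}: integers are written in a base-$\beta$ positional encoding whose columns are tight (total weight equals total target on every column except the last), so that, column by column from the heaviest digit down, any partition deviating from the intended triplet/element/value structure incurs a deficit in a high power of $\beta$ and is dominated; the only surviving Pareto efficient deficit vectors have a single nonzero coordinate equal to $-\sum_{t\in N\setminus N'}v_t$ for an \emph{optimal} matching $N'$, so any output reveals the optimum. The same construction then yields both hardness claims at once: weighted \textsc{Max3DM} gives $\text{FP}^{\text{NP}}[\text{poly}]$-hardness, and the unweighted ($v_t\in\{0,1\}$) case keeps all integers polynomially bounded, giving strong $\text{FP}^{\text{NP}}[\text{log}]$-hardness -- whereas you propose two separate, uninstantiated constructions. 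Your membership argument is fine and matches the paper's (a leximax partition via binary search with an NP-oracle), but the hardness half of the lemma is not established by the proposal.
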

\begin{proof}
\textsc{ParetoPartition} (a partition into $m$ subsets targeting $\theta$) belongs to $\text{FP}^{\text{NP}}[\text{poly}]$. Indeed a Leximax partition (thus Pareto efficient) can be found by making a polynomial number of calls to an NP-oracle on the following subproblem: Given one deficit per subset ${\delta}_1,\ldots,{\delta}_{m}$, decide whether a mapping from $W$ to subsets $V_1,\ldots,V_{m}$ exists, such that deficits are greater or equal to ${\delta}_1,\ldots,{\delta}_{m}$. A Leximax partition can be found by iterating on $V_i$ from $V_1$ to $V_{m}$. 
Assuming the first components ${\delta}_1,\ldots,{\delta}_{i-1}$ of a Leximax Pareto efficient partition were previously fixed by iterations $V_1$ to $V_{i-1}$ and ${\delta}_{i+1}=\ldots={\delta}_{m}=-\theta$,
we set ${\delta}_{i}$ to the best feasible deficit for $V_i$ by a binary search in $[-\theta,0]$ using the NP-oracle on the subproblem above.

Let any instance of \textsc{Max3DM} be defined by finite sets $A,B,C$ with $|A|=|B|=|C|=d$ and triplets set $N\subseteq A\times B\times C$, $|N|=n$. Triplet $t=(a,b,c)\in N$ is mapped to  payoff $v_t\in\mathbb{N}$. 
In a (partial) 3-dimensional matching (3DM) $N'\subseteq N$, any element of $A\cup B\cup C$ occurs at most once.
The goal is to maximize $\sum_{t\in N'}v_t$ for $N'\subseteq N$ any (partial) 3-dimensional matching. Note that maximizing $-\sum_{t\in N'\setminus N}v_t$ is an equivalent goal.
This problem is $\text{FP}^{\text{NP}}[\text{poly}]$-complete \cite[Th. 3.5]{gasarch1995optp}.
For every $a\in A$ (resp. $b\in B$, $c\in C$), let $\#a$ (resp. $\#b$, $\#c$) denote the number of occurrences of $a$  (resp. $b$, $c$) in $N$: the number of triplets that contain $a$ (resp. $b$, $c$). Let $v_N$ denote total $\sum_{t\in N}v_t$. Elements $a_i\in A,b_j\in B,c_k\in C$ and triplets $t\in N$ are identified with integers $i,j,k\in[d]$ and $t\in[n]$.

We reduce this problem to the following instance of \textsc{ParetoPartition} for which finding a Pareto efficient solution produces the optimum for the given \textsc{Max3DM} instance. 
Set $W$ contains $8n$ integers that must be partitioned into $m=n+1$ subsets (of various cardinalities). %Recall that each subset $V_1,\ldots,V_{n+1}$ bears one objective for Pareto efficiency. 
Given basis $\beta\in\mathbb{N}_{\geq 2}$ and integer sequence $(z_i)_{i\in\mathbb{N}}$, we define integer $\langle \ldots z_2~z_1~z_0\rangle$ by $\sum_{i\geq 0}z_i\beta^i$.
Let $\beta$ be an integer large enough for such representation in basis $\beta$ to never have carryovers, even when one adds all the integers in $W$. 
Choosing $\beta=\max\{30n^3d,nv_N\}+1$ largely satisfies this purpose. 
Let $\Sigma_n$ denote $\sum_{t=1}^{n}t=\frac{n(n+1)}{2}$.
The integers in set $W$ are represented below. 
For each $t=(a_i,b_j,c_k)\in N$, there is a \emph{triplet}-integer $w(t)$.
For each $a_i\in A$, we introduce one \emph{actual}-integer $w(a_i)$ representing the actual element intended to go with the triplets in a (partial) 3DM, and $\#a_i-1$ \emph{dummies}, present in triplets that are not in the 3DM. Similarly, we introduce $\#b_j$ (resp. $\#c_k$) integers for every $b_j\in B$ (resp. $c_k\in C$).
For each $t\in N$, there are four \emph{value}-integers $w(v_t)$. 
Target $\theta$ is below. We also indicate values $\theta-w(t)$ which will be useful later. 
$$
\begin{array}{lrcccccccccl}

 & & z_7 & z_6 & z_5 & z_4 & z_3 & z_2 & z_1 & z_0\\[1ex]
\hline

\forall t\!\in\! N, & 
w(t\!=\!(a_ib_jc_k)) = \langle & {3n\!-\!4} & {24n\!-\!15} & -i & -j & -k & {3\Sigma_n\!-\!t} & 3d\!+\!3n\!-\!3 & v_N&\rangle\\[2ex]

\forall a_i\!\in\! A,&
^{\text{one actual,}}_{\#a_i-1\text{ dum.}} ~ w(a_i) = \langle& 1 & 1 & i & 0 & 0 & 0 & ^{1\text{ (actual)}}_{0\text{ (dum.)}} & 0 &\rangle\\[2ex]

\forall b_j\!\in\! B,&
^{\text{one actual,}}_{\#b_j-1\text{ dum.}} ~ w(b_j) = \langle& 1 & 2 & 0 & j & 0 & 0 & ^{1\text{ (actual)}}_{0\text{ (dum.)}} & 0 &\rangle\\[2ex]

\forall c_k\!\in\! C,&
^{\text{one actual,}}_{\#c_k-1\text{ dum.}} ~ w(c_k) = \langle& 1 & 4 & 0 & 0 & k & 0 & ^{1\text{ (actual)}}_{0\text{ (dum.)}} & 0 &\rangle\\[2ex]

\forall t\!\in\! N,&
 ^{^{\text{``zero''}}_{\text{``one''}}}_{^{\text{``two''}}_{\text{``three''}}} ~ w(v_t) = \bigg\langle& 1 & 8 & 0 & 0 & 0 & t & ^{^{0\text{ (zero)}}_{1\text{ (one)}}}_{^{2\text{ (two)}}_{3\text{ (three)}}} & ^{^{-v_t}_{0}}_{^{0}_{0}} &\bigg\rangle\\[2ex]
\hline
\text{Target } & \theta = \langle& 3n & 24n & 0 & 0 & 0 & 3\Sigma_n & 3d+3n & 0 & \rangle\\[2ex]
\hline
\textit{Remark:} & \theta-w(t) = \langle& 4 & 15 & i & j & k & t & 3 & -v_N & \rangle
\end{array}
$$

Since every subset has same target $\theta$, 
given a partition $\left(V_i\mid i\in[m]\right)$ with deficits $\bm{\delta}\in\mathbb{Z}^m$ and any permutation $\sigma:[m]\leftrightarrow[m]$, deficits $(\delta_{\sigma(i)}\mid i\in[m])$ are also feasible by the permuted partition $(V_{\sigma(i)}\mid i\in[m])$. On every column but $z_0$, total offer (weights) equates total demand (targets). For instance, on column $z_1$, it holds that $n(3d+3n-3)+3d+6n=(n+1)(3d+3n)$.

Given any maxim\underline{al} 3DM $N'\subseteq N$, one can make partitions such that for one arbitrary subset $V_{(\ast)}$ deficit is $\delta_{(\ast)}=-\sum_{t\in N\setminus N'}v_t$  and for the $n$ other subsets $V_{(t)}$ deficit is $\delta_{(t)}=0$, as follows:
\begin{itemize}
\item For every $t=(a_i,b_j,c_k)\in N'$, we make a subset $V_{(t)}$ that contains $w(t)$, the three actuals $w(a_i),w(b_j),w(c_k)$  and integer $w(v_t)$ ``zero''. 
Integers $w(v_t)$ ``one, two and three'' are sent to $V_{(\ast)}$ without the $-v_t$ deficit from ``zero''.
%Goal integer $w(v_t)$ is sent to $V_{(\ast)}$ which benefits from it.
\item For every $t=(a_i,b_j,c_k)\in N\setminus N'$, we make a subset $V_{(t)}$ that contains $w(t)$, actual or dummy integers $w(a_i),w(b_j),w(c_k)$;\footnote{A complete 3DM may not exist. A partial 3DM may leave some actuals in $N\setminus N'$.}
and, if subset $V_{(t)}$ contains resp. one, two or three dummies,\footnote{By maxim\underline{al}ity of $N'$, zero dummies is not possible.} integer $w(v_t)$ respectively ``one, two or three''. The other integers $w(v_t)$ which include deficit $-v_t$ are sent to $V_{(\ast)}$.
\end{itemize}
%Note that if $N'$ is not optimal, then the corresponding partitions are Pareto dominated. 
From any optimal 3DM $N'$ and $i\in[m]$, let $\bm{\delta}^{\text{opt}(i)}$ be the deficit vector ${\delta}^{\text{opt}(i)}_{i}=-\sum_{t\in N\setminus N'}v_t$ and  $\bm{\delta}^{\text{opt}(i)}_{-i}\equiv 0$\footnote{Given a vector $\bm{\delta}\in\mathbb{Z}^{n+1}$ and $i\in[n+1]$, $\bm{\delta}_{-i}\in\mathbb{Z}^{n}$ denotes the same vector where the $i$th component is removed.} where $V_{(\ast)}=V_i$.
%For any $i\in[m]$, let $\bm{\delta}^{\text{opt}(i)}$ be the deficit vector ${\delta}^{\text{opt}(i)}_{i}=-\sum_{t\in N\setminus N'}v_t$ and ${\delta}^{\text{opt}(i)}_{-i}\equiv 0$ obtained from an optimal 3DM $N'$ with $V_{(\ast)}=V_i$. 
Below, we show that this family of $m$ deficit vectors dominate all the others, hence are the only Pareto efficient ones.
The idea is that every subset $V_i$ (which objective is to maximize $\delta_i$ up to zero), has a column-wise lexicographic preference on integers, from heaviest column $z_7$ (weight $\beta^7$) to the lowest $z^0$ (weight $\beta^0$). Indeed, since in each column (but $z_0$), total offer (weights) equates total demand (targets), an unbalanced partition is always dominated: at efficiency, a column's deficit is exactly zero and cannot overrun a lower one.  And, sums of integers in $W$ never have carryovers from a column to a heavier one.
By reasoning from $z^7$ to $z^1$, any partition which does not satisfy all the following conditions is clearly Pareto dominated by some $\bm{\delta}^{\text{opt}(i)}$ because of one huge deficit in multiples of $\beta$ on some component $\delta_i$.
\begin{description}
\setlength{\itemsep}{1em}
\item[$z^7$: ] No subset contains two triplet-integers. 
Therefore, $n$ subsets (among $m\!=\!n\!+\!1$) can be identified from the triplet-integer $w(t)$ contained by $V_{(t)}$; and we identify the last one by $V_{(\ast)}$. These subsets can be ordered indifferently. For a subset $V_{(t)}$, remaining deficit $\theta-w(t)$ is:
$$
\arraycolsep=6.0pt
\begin{array}{rcccccccc}
\theta_{(t)}=\langle 4 & 15 & i & j & k & t & 3 & -v_N \rangle
\end{array}
$$
Thus, subsets $V_{(t)}$ must contain four other integers to cancel the deficit 4
at $z^7$. Then, $V_{(\ast)}$ must contain the remainder of the $n$ integers;
the deficit on column $z^7$ becomes $0$. 
\item[$z^6$:]
Subset $V_{(\ast)}$ contains $3n$ \emph{value}-integers $w(v_t)$,
so its value at $z^6$ be $24n$ (i.e., no deficit).  
Subsets $V_{(t)}$ must contain one of each in integers $w(a)$, $w(b)$, $w(c)$ and $w(v)$
to cancel deficit 15 at $z^6$.
\item[$z^{5}$--$z^{2}$:]
To cancel deficits from $z^5$ to $z^2$,
for $t\!=\!(a_i,b_j,c_k)\!\in\!N$, subset $V_{(t)}$ contains
\emph{precisely} one of each in integers $w(a_i)$, $w(b_j)$, $w(c_k)$,
and $w(v_t)$.
Also, $V_{(\ast)}$ has deficit $3\sum_{n}$ at $\beta^2$.
Thus, it needs \emph{exactly} three $w(v_t)$ of every triplet $t$ to
cancel the deficit, otherwise some $V_{(t)}$ would be missing his.
\item[$z^{1}$--$z^{0}$:]
Again, due to tightness of offer on demand for $z_1$,
subset $V_{(t)}$ must contain either (i) three actual elements and integer $w(v_t)$ ``zero'' or
(ii) one, two or three dummy elements and integer $w(v_t)$ respectively ``one, two or three''.
In case (i), integers $w(v_t)$ one, two and three go to $V_{(\ast)}$ without degrading it.
In case (ii), three integers $w(v_t)$ which include integer $w(v_t)$ ``zero'' go to $V_{(\ast)}$ and degrade it by $-v_t$.
\end{description}

All in all, Pareto efficiency constrains partitions to structure as in the mapping from a 3-dimensional matching $N'$ given above: the only possible Pareto efficient deficit vectors are $\bm{\delta}^{\text{opt}(i)}$ for $i\in[m]$ and thus provide the optimum for \textsc{Max3DM}. Consequently, this reduction is metric.
Since weighted \textsc{Max3DM} is $\text{FP}^{\text{NP}}[\text{poly}]$-hard \cite{gasarch1995optp}, so is \textsc{ParetoPartition}.
Since 
\emph{unweighted} \textsc{Max3DM} is $\text{FP}^{\text{NP}}[\text{log}]$-hard 
and for $v_t\!\in\!\{0,1\}$ no integer exceeds polynomial $\beta^8$, 
\textsc{ParetoPartition} is also \emph{strongly} $\text{FP}^{\text{NP}}[\text{log}]$-hard. 
\end{proof}

\begin{lemma}\label{lem:2}
If the numbers in problem \textsc{ParetoPartition} are polynomially bounded, then the reduction \textsc{ParetoPartition} $\leq_p$ \textsc{SPR/Nw/Find} holds.
\end{lemma}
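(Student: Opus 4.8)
The plan is to exhibit an explicit metric reduction. Given a \textsc{ParetoPartition} instance $(W,m,\theta)$, the forward map $f$ builds the SPR instance with $m$ projects $p_1,\ldots,p_m$; for each $i\in[m]$ a block of $\theta$ students whose \emph{only} acceptable project is $p_i$, with $p_i$ declaring exactly these $\theta$ students acceptable in an arbitrary order; and $|W|$ resources $r_1,\ldots,r_{|W|}$ with capacities $q_{r_k}=w_k$ and $T_{r_k}=P$ for all $k$. The post-processing map $g$ takes a nonwasteful matching $(Y,\mu)$ and returns the partition $(\mu^{-1}(p_1),\ldots,\mu^{-1}(p_m))$. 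Since every resource is compatible with every project, an allocation $\mu$ is literally a partition of $W$ into $m$ (possibly empty) parts $V_i=\mu^{-1}(p_i)$ with $q_\mu(p_i)=w(V_i)$, so $g$ is well-defined; the substance of the lemma is that its output is always Pareto efficient.

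The core step is the structural equivalence: a feasible $(Y,\mu)$ in $f(W,m,\theta)$ is nonwasteful if and only if the deficit vector $\bm{\delta}$ of the partition $(V_i)_i=(\mu^{-1}(p_i))_i$ is Pareto efficient among the deficit vectors of all partitions of $W$. I would prove this in two observations. First, since each student accepts exactly one project, $Y$ is determined by the set of matched students, and if $|Y(p_i)|<\min\{\theta,q_\mu(p_i)\}$ then some student of $p_i$ is unmatched while $p_i$ still has a free seat under $\mu$, giving a claiming pair already with $\mu'=\mu$; hence nonwastefulness forces $|Y(p_i)|=\min\{\theta,w(V_i)\}=\theta+\delta_i$, where $\delta_i=\min\{w(V_i)-\theta,0\}$ is exactly the $i$-th deficit of $(V_i)_i$. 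Second, the only candidate claiming pairs are $(s,p_i)$ with $s$ an unmatched student of $p_i$, i.e.\ with $\delta_i<0$: such a pair is claiming iff there is an allocation $\mu'$ — equivalently a partition $(V_j')_j$ — with $w(V_i')\ge\theta+\delta_i+1$ and $w(V_j')\ge\theta+\delta_j$ for $j\ne i$. Using $\theta+\delta_j=\min\{\theta,w(V_j)\}$ and $\delta_i\le-1$, a short case analysis on the $\min$ shows this is equivalent to the existence of a partition whose deficit vector weakly dominates $\bm{\delta}$ and is strictly larger on coordinate $i$. Ranging over $i$, $(Y,\mu)$ admits a claiming pair iff $\bm{\delta}$ is Pareto dominated by some deficit vector, so $(Y,\mu)$ is nonwasteful iff $\bm{\delta}$ is Pareto efficient. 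A Pareto efficient deficit vector always exists (finitely many deficit vectors), so \textsc{SPR/Nw/Find} does return some nonwasteful $(Y,\mu)$, and $g$ then outputs a valid \textsc{ParetoPartition} solution.

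It remains to verify that $f$ and $g$ run in polynomial time, which is exactly where the hypothesis enters. The built instance has $m\theta$ students and $|W|$ resources, so its length is polynomial in $m$, $\theta$, and $|W|$; since $m$ and $\theta$ are among the numbers of the \textsc{ParetoPartition} instance and $|W|$ is bounded by its input size, the assumption that the numbers are polynomially bounded makes $f$ polynomial-time, and $g$ is trivially so. Without this assumption the number of students would be exponential, as in the remark after Theorem~\ref{thm:feasibility}. Together with Lemma~\ref{lem:1} (strong $\text{FP}^{\text{NP}}[\text{log}]$-hardness of \textsc{ParetoPartition}), this metric reduction gives the $\text{FP}^{\text{NP}}[\text{log}]$-hardness claimed for \textsc{SPR/Nw/Find}, even with one acceptable project per student.

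I expect the main obstacle to be the bookkeeping in the structural equivalence: correctly handling the $\min$ in the definition of the deficit, the ``$+1$'' shift caused by adding a single student in a claiming pair, and the argument that nonwastefulness already forces each project $p_i$ to carry exactly $\min\{\theta,w(V_i)\}$ students — so that the matching faithfully encodes a partition and nothing is lost in either direction of the correspondence, making $(f,g)$ a genuine metric reduction.
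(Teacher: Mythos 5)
Your proposal is correct and follows essentially the same route as the paper: the identical construction ($m$ projects, $\theta$ dedicated students each, resources identified with $W$ and compatible with every project, polynomially many students by the boundedness hypothesis) and the same key observation that claiming pairs correspond exactly to Pareto-dominating reallocations, so a nonwasteful matching yields a Pareto efficient deficit vector. Your write-up merely makes explicit the bookkeeping (forcing $|Y(p_i)|=\min\{\theta,w(V_i)\}=\theta+\delta_i$ and the ``$+1$'' shift) that the paper's proof states in one sentence.
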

\begin{proof}
We reduce any instance  $W=\{w_1,\ldots,w_{|W|}\}$, $m\in\mathbb{N}$, $\theta\in\mathbb{N}$ of \textsc{ParetoPartition} to an instance of \textsc{SPR/Nw/Find}. There are $m$ projects $p_1,\ldots,p_m$; for every project $p_i$ there is a disjoint set of $\theta$ students who consider only $p_i$ acceptable (and reciprocally). Project $p_i$ ranks these students arbitrarily.  Resources $R$ are identified with set $W$: any resource is compatible with any project and $q_R=(w_1,\ldots,w_{|W|})$. Crucially, with numbers in  \textsc{ParetoPartition} polynomially bounded, there are only polynomially many students.

Computing a nonwasteful matching $(Y,\mu)$ outputs a partition $V_1,\ldots,V_m\equiv \mu^{-1}(p_1),\ldots,\mu^{-1}(p_m)$ with Pareto efficient deficits.
Indeed, by definition, a claiming pair would exist if and only if there was an allocation (resp. partition) where the number of unmatched students per project (resp. deficit vector) Pareto dominated the ``deficit vector'' of allocation/partition $V_1,\ldots,V_m$.
\end{proof}

\subsection{The Complexity of Stability}

A matching that is both nonwasteful and fair (i.e., stable) may not exist. In this section, we settle the complexity of deciding whether such a matching exists in a given SPR as $\text{NP}^{\text{NP}}$-complete, which is strictly more intractable than NP-complete.

\begin{theorem}\label{th:stable:verif}
\textsc{SPR/Stable/Verif} is coNP-complete,
even if students only have one acceptable project.
\end{theorem}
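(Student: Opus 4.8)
The plan is to establish coNP membership via short refutations, and coNP-hardness by recycling the reduction behind the preceding theorem (that \textsc{SPR/Nw/Verif} is coNP-complete), after verifying that the matching produced there is always fair.

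For membership, observe that a no-instance of \textsc{SPR/Stable/Verif} is a feasible matching $(Y,\mu)$ that is either wasteful or unfair. If it is wasteful, a claiming pair $(s,p)$ together with a witnessing allocation $\mu'$ is a polynomial-size refutation: one checks that $(s,p)\in X\setminus Y$, that $p\succ_s Y(s)$, that $\mu'(r)\in T_r$ for every $r$, and that $Y\setminus\{(s,Y(s))\}\cup\{(s,p)\}$ obeys every capacity $q_{\mu'}(\cdot)$. If it is unfair, an envious pair $(s,p)$ together with the envied student $s'$ is a polynomial-size refutation: one checks $(s,p)\in X\setminus Y$, $p\succ_s Y(s)$, $s'\in Y(p)$, and $s\succ_p s'$. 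All checks run in time polynomial in the instance length, so $\textsc{SPR/Stable/Verif}\in\text{coNP}$.

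For hardness, I would reuse the reduction from \textsc{4-Partition} of the previous theorem: an instance $W=\{w_1,\dots,w_{4m}\}$ with target $\theta$ produces the SPR instance of Figure~\ref{fig:1} together with the feasible matching $(Y,\mu)$ in which all students are matched except the lone student $s^\ast$ whose only acceptable project is $p_{m+2}$, with the property that $W$ admits a \textsc{4-Partition} solution if and only if $(s^\ast,p_{m+2})$ is a claiming pair. The one new thing to verify is that $(Y,\mu)$ is \emph{fair} for \emph{every} $W$. Since each student has a unique acceptable project, a matched student $s$ is matched to her only (hence top) choice, so no $p$ satisfies $p\succ_s Y(s)$; thus the sole candidate envious pair involves the unique unmatched student $s^\ast$ and equals $(s^\ast,p_{m+2})$. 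It then suffices to fix the (otherwise arbitrary) ranking $\succ_{p_{m+2}}$ so that $s^\ast$ is acceptable but ranked below each of the $m\theta+m-1$ students actually assigned to $p_{m+2}$ by $Y$; then no $s'\in Y(p_{m+2})$ satisfies $s^\ast\succ_{p_{m+2}}s'$, so $(Y,\mu)$ has no envious pair.

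Consequently, in this construction stability coincides with nonwastefulness, so $(Y,\mu)$ is stable if and only if it has no claiming pair, i.e.\ if and only if $W$ has no \textsc{4-Partition} solution. This is a polynomial-time many-one reduction from the complement of \textsc{4-Partition} to \textsc{SPR/Stable/Verif} (polynomial because the $w_i$ and $\theta$ are polynomially bounded, hence only polynomially many students are created), and it preserves the restriction that each student has a single acceptable project. As \textsc{4-Partition} is (strongly) NP-hard, its complement is coNP-hard, so \textsc{SPR/Stable/Verif} is coNP-hard, and therefore coNP-complete. The main obstacle is precisely the fairness check of the borrowed matching; once one notices that single-acceptability collapses every potential envious pair onto the single unmatched student, the rest is bookkeeping.
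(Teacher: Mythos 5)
Your proposal is correct and matches the paper's own argument: the paper likewise reuses the \textsc{SPR/Nw/Verif} construction and fixes $\succ_{p_{m+2}}$ so that the $m\theta+m-1$ matched students are its top choices, making envious pairs impossible so that stability collapses to nonwastefulness. Your explicit coNP-membership check (claiming pair with witnessing allocation, or envious pair with the envied student) is the same certificate argument the paper leaves implicit when it says "the same proof holds."
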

\begin{proof}
The construct is the same as for \textsc{SPR/Nw/Verif}.
Assuming that in the given matching project $p_{m+2}$ has its $m\theta+m-1$ top-preferred students,
the concept of an envious pair becomes empty in this construction; hence stability amounts to nonwastefulness. Therefore, the same proof holds. 
\end{proof}

\begin{theorem}
\textsc{SPR/Stable/Exist} is $\text{NP}^{\text{NP}}$-complete.
\end{theorem}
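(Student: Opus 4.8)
\emph{Membership.} The plan is first to place \textsc{SPR/Stable/Exist} in $\text{NP}^{\text{NP}}$. I would use a stable matching $(Y,\mu)$ as a polynomial-size certificate and verify it with a single NP-oracle call. Feasibility of $(Y,\mu)$, the matching conditions, and fairness are all checkable in polynomial time; nonwastefulness is the negation of the NP property ``there exist $(s,p)\in X\setminus Y$ with $p\succ_s Y(s)$ and an allocation $\mu'$ making $Y\setminus\{(s,Y(s))\}\cup\{(s,p)\}$ feasible'' (the certificate being $(s,p)$ together with $\mu'$, exactly as in the proof of Theorem \ref{thm:feasibility}). So one queries the oracle on this property and accepts iff it answers ``no''; this gives a $\text{NP}^{\text{NP}}$ procedure.

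\emph{Hardness --- overall structure.} For hardness I would reduce from \textsc{$\forall\exists$-4-Partition}, which is hard at the second level of the hierarchy even with polynomially bounded numbers, building an SPR that admits \emph{no} stable matching if and only if the given instance is a yes-instance; equivalently, it admits a stable matching iff some map $\sigma:[\ell]\to\{0,1\}$ has \emph{no} $\sigma$-satisfying partition. Since $\theta$ and the $w_i$ are polynomially bounded, all student pools below stay polynomial and the reduction runs in polynomial time. Given an instance $(\theta, W=(w_1,\dots,w_{4m}), \mathcal{L}=((u_1,v_1),\dots,(u_\ell,v_\ell)))$, the SPR would have three coupled parts: (i) \emph{bin projects} $p_1,\dots,p_m$, the $j$-th with a private pool of $\theta$ students acceptable only to it, together with resources $r_1,\dots,r_{4m}$ of capacities $w_i$ compatible with the bin projects --- because $\tfrac{\theta}{5}<w_i<\tfrac{\theta}{3}$, filling a bin project to exactly $\theta$ forces it to receive exactly four of these summing to $\theta$, reproducing the \textsc{4-Partition} skeleton; (ii) for each couple $(u_i,v_i)$ a small \emph{choice gadget} built around the conflict configuration of Example \ref{counterex:stable}, with its own auxiliary project(s) and students, such that the gadget admits exactly two locally fair and locally nonwasteful resolutions --- one routing $r_{v_i}$ into $p_i$ (encoding $\sigma(i)=1$, so $v_i$ may accompany the forced $u_i$ in bin $i$), the other diverting $r_{v_i}$ away from $p_i$ (encoding $\sigma(i)=0$) --- and moreover such that the gadget's matched students \emph{lock} the routing of $r_{u_i},r_{v_i}$ in any feasible reallocation that keeps them matched; (iii) a \emph{wastefulness trap} copied from Figure \ref{fig:1} on top of the $r_i$'s, with one extra student $s^\ast$ whose only acceptable project $p^\dagger$ can accommodate her iff the resources can be repacked into $m$ bins of size $\theta$ consistent with the $\sigma$ currently read off the gadgets --- so $(s^\ast,p^\dagger)$ is a claiming pair exactly when a $\sigma$-satisfying partition exists.

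\emph{Correctness and the hard part.} With these pieces the forward direction runs: a stable matching is fair, hence resolves each gadget and thereby defines $\sigma$; fairness also excludes envious pairs (the bin-student pools are private and the gadgets are fair by construction); and nonwastefulness forbids the trap from firing, so no $\sigma$-satisfying partition exists. Conversely, from a $\sigma$ with no $\sigma$-satisfying partition I would set every gadget to its $\sigma(i)$-resolution, allocate the $r_i$'s as far as the $\sigma$-constraints permit, leave $s^\ast$ and the unavoidably unmatched bin-students unmatched, and argue stability: fairness is local to the gadgets and the private pools, and there is no claiming pair because $(s^\ast,p^\dagger)$ would require a $\sigma$-satisfying partition (the gadget students lock the reallocation to $\sigma$'s routing), while every other unmatched student sits in a strictly under-capacity bin that no reallocation can complete. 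The delicate, and I expect hardest, step is designing the choice gadget of part (ii) so that it simultaneously (a) contributes no envious or claiming pair for \emph{either} value of its bit, (b) genuinely forces the routing that encodes $\sigma(i)$, and (c) constrains \emph{every} feasible reallocation --- not merely the current one --- to preserve that routing, so that the trap of part (iii) certifies precisely a $\sigma$-satisfying partition (mirroring the claiming-pair analysis used for \textsc{SPR/Nw/Verif}). Once such gadgets are verified, combining the two directions with the membership argument yields $\text{NP}^{\text{NP}}$-completeness.
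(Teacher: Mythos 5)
Your membership argument and your overall architecture (reduce \textsc{$\forall\exists$-4-Partition} to the complement, so the SPR admits no stable matching exactly on yes-instances, relying on the strong hardness of that problem as the paper does via Lemma \ref{lem:3}) match the paper. But the core of the reduction --- the analogue of Lemma \ref{lem:4} --- is missing, and the two devices you sketch in its place would not work as stated. First, your ``choice gadget'' encoding the universal quantifier is explicitly left undesigned (you call it the hardest step), and basing it on the conflict of Example \ref{counterex:stable} is self-defeating: that configuration is precisely the one with \emph{no} stable resolution once it holds a resource, whereas your requirement (a) demands the gadget be fair and nonwasteful under both values of its bit. The paper obtains a genuinely bistable selector differently: students $\overline{s_{v_i}}$ with preference $p'_i\succ p_i$ and a resource $r_{v_i}$ of capacity $v_i$ compatible with $p'_i$ and with every bin \emph{except} $p_i$, so that either resolution is locally stable; and it sidesteps your locking requirement (c) for $u_i$ entirely by not making $u_i$ a resource at all, instead shrinking bin $i$'s student pool to $\theta-u_i-v_i$. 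Forcing a preference-free resource to keep its route under \emph{every} feasible reallocation, as (c) asks, is exactly the part that has no obvious implementation.

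Second, your ``wastefulness trap'' does not yield non-existence. If the $\sigma$ read off the gadgets admits a satisfying repacking, the fact that $(s^\ast,p^\dagger)$ is a claiming pair only shows that the particular matching leaving $s^\ast$ unmatched is unstable; nothing in your construction prevents the fully matched outcome, using that repacking, from being stable (in the Figure \ref{fig:1} configuration, matching everyone when the packing exists is feasible and stable). So the forward direction (yes-instance $\Rightarrow$ no stable matching) fails. The paper's trigger is a coupling you do not have: the Example \ref{counterex:stable} gadget $s_a,s_b,p_a,p_b$ shares a single unit resource $r_1$ with the bins (Figure \ref{fig:2}). When a $\sigma$-satisfying partition exists, the bins can be covered without $r_1$, so stability forces $r_1$ onto $p_a$ or $p_b$ (otherwise $s_a$ or $s_b$ has a claiming pair), and the gadget is then unstable; when no $\sigma$-satisfying partition exists, $r_1$ is genuinely needed at the bins and the resource-less gadget is vacuously stable. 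Without such a shared-resource mechanism tying the packing question to an unavoidable instability, your sketch does not establish $\text{NP}^{\text{NP}}$-hardness.
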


\begin{proof}
A stable matching is a yes-certificate verifiable by NP-oracle (Theorem \ref{th:stable:verif});
hence, \textsc{SPR/Stable/Exist} belongs to $\text{NP}^{\text{NP}}$.
Hardness follows from Lemmas \ref{lem:3} and \ref{lem:4} below.
\end{proof}

\begin{lemma}\label{lem:3}
\textsc{$\forall\exists$-4-Partition} is strongly $\text{coNP}^{\text{NP}}$-hard.
\end{lemma}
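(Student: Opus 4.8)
The plan is to prove strong $\text{coNP}^{\text{NP}}$-hardness by a polynomial-time reduction from $\forall\exists$-3SAT, which is $\text{coNP}^{\text{NP}}$-complete (it is the $\Pi_2$ level of the polynomial hierarchy, complete already for 3-CNF matrices over polynomially many variables and clauses). An instance is a 3-CNF $\psi$ over universal variables $x_1,\dots,x_n$ and existential variables $y_1,\dots,y_k$, asking whether $\forall x\,\exists y:\psi(x,y)$. From it I would build, in polynomial time, an instance $(W,\theta,\mathcal{L})$ of \textsc{$\forall\exists$-4-Partition} together with an identification of the toggle maps $\sigma:[\ell]\to\{0,1\}$ with the universal assignments $x$ (taking $\ell=n$ and $x_i^\sigma=\sigma(i)$), designed so that a $\sigma$-satisfying partition of $W$ exists if and only if $\exists y:\psi(x^\sigma,y)$. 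Quantifying over all $\sigma$ then converts ``$\forall\sigma\,\exists$ partition'' into ``$\forall x\,\exists y:\psi$''.

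For the encoding I would reuse the base-$\beta$ ``column'' bookkeeping of Lemma~\ref{lem:1}: each integer of $W$ is a tuple of digit blocks in a base $\beta$ large enough that sums over $W$ carry no digit across blocks, so that the constraints $|V_i|=4$, $w(V_i)=\theta$ force every block of every subset to balance on its own. The blocks I would use are: (i) a block giving each variable and each clause gadget a private tag, confining its integers to dedicated subsets and, in particular, placing each anchor $u_i$ in its own subset $V_i$; (ii) a truth block carrying, for universal $x_i$, the couple $(u_i,v_i)$ — putting $v_i$ with $u_i$ (i.e.\ $\sigma(i)=1$) completes $V_i$ only with ``true-fillers'' and releases true-literal tokens for $x_i$, while keeping $v_i$ out of $V_i$ forces $v_i$ into a dedicated ``$x_i$-false'' subset, completes $V_i$ with ``false-fillers'' and releases false-literal tokens; for existential $y_j$ an identical two-way gadget is present but its choice is made freely inside the partition; (iii) one block per clause of $\psi$, carrying integers balanceable only when at least one literal token for that clause is available. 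By construction the couples $(u_i,v_i)$ are pairwise disjoint, and by adjusting the least-significant block every integer can be forced into $(\theta/5,\theta/3)$ as required.

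Correctness runs in two directions. If $\forall x\,\exists y:\psi(x,y)$, then for any $\sigma$ one takes $x^\sigma$, a witness $y$ with $\psi(x^\sigma,y)$, and one true literal per clause, and assembles the partition these choices dictate, which is $\sigma$-satisfying. Conversely, the block constraints force every $\sigma$-satisfying partition to decode into a pair $(x^\sigma,y)$ for which all clause blocks balance, i.e.\ $\psi(x^\sigma,y)=1$; hence if the constructed instance is a yes-instance then for every $x$ the map $\sigma$ inducing $x$ supplies a $y$ with $\psi(x,y)$. Finally, since $\beta$ and $|W|$ are polynomial in $|\psi|$, all $w_i$ and $\theta$ are polynomially bounded, so the reduction is \emph{strong}.

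The main obstacle I anticipate is the one that already makes 4-Partition reductions delicate, compounded by the quantifier structure: the variable and clause gadgets must simultaneously (a) balance every digit block, (b) use exactly four summands per subset, and (c) keep each summand inside the narrow window $(\theta/5,\theta/3)$ — \emph{and}, the genuinely new difficulty, the universal couples must be freely togglable, meaning that for every $i$ both $\sigma(i)=0$ and $\sigma(i)=1$ always admit a legal local completion of $V_i$ together with a home for $v_i$. Only then does $\forall\sigma$ genuinely range over all $2^n$ assignments, so that a $\sigma$-satisfying partition is absent exactly when $\psi(x^\sigma,\cdot)$ is unsatisfiable, and never because the encoding itself obstructs some $\sigma$. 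Verifying that the ``$\forall$-side'' of the construction is clean in this sense is where the bulk of the work lies.
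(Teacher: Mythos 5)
Your overall strategy---reduce from the $\Pi_2$-complete problem $\forall\exists$-3SAT, encode everything in base-$\beta$ digit blocks with no carryovers, and identify the toggle map $\sigma$ with the universal assignment---is a legitimate route in principle, but as written the proof has a genuine gap: the gadgets that carry all the weight are never constructed. You describe what the "truth block," the "literal tokens," the "true/false fillers," and the per-clause blocks should achieve, but you give no integers, no verification that every subset can simultaneously (a) contain exactly four summands, (b) sum exactly to $\theta$ in every digit block, and (c) keep every summand in the window $(\theta/5,\theta/3)$. The clause gadget is the critical sticking point: a 3SAT clause is an OR ("at least one literal token available"), and encoding a disjunction under \emph{exact}-sum, exact-cardinality constraints requires slack/absorber elements (as in the classical SUBSET-SUM-from-3SAT reduction, where clause digits may total $1$, $2$, or $3$ and filler integers top them up); here there is no room for such slack without breaking the four-elements-per-subset and narrow-window requirements, and you give no mechanism for it. Likewise, the "free togglability" of each universal couple $(u_i,v_i)$---that both $\sigma(i)=0$ and $\sigma(i)=1$ always admit a legal local completion of $V_i$ and a home for $v_i$, so that a $\sigma$-satisfying partition fails only when $\psi(x^\sigma,\cdot)$ is unsatisfiable and never because the encoding itself blocks some $\sigma$---is exactly the soundness condition of the reduction, and you explicitly defer it ("where the bulk of the work lies"). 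Deferring it means the hardness claim is not established.

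For comparison, the paper sidesteps the OR difficulty entirely by reducing from $\forall\exists$-3DM (shown $\text{coNP}^{\text{NP}}$-complete by McLoughlin), whose existential structure is an exact cover rather than a disjunction: each 4-subset is forced, digit block by digit block, to consist of one triplet integer together with the three element integers (actual or dummy) it names, so "exists a completing 3DM $N'$" translates directly into "exists a completing exact 4-partition," and the universal set $M'$ maps onto the couples $(u_t,v_t)$ between a triplet integer and its actual $a$-element. No clause/OR gadget is needed, and strongness follows because all integers are bounded by $\beta^6$ with $\beta$ polynomial. If you want to salvage your SAT-based route, you would need to either exhibit the clause gadget explicitly (including the absorber integers and a proof that every block balances in both directions of the correspondence), or interpose an intermediate exact-cover-style problem---which is essentially what the paper's choice of $\forall\exists$-3DM accomplishes.
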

\begin{proof}
Let any instance of \textsc{$\forall\exists$-3DM} be defined by finite sets $A,B,C$ with $|A|=|B|=|C|=d$ and two disjoint triplet sets $M,N\subseteq A\times B\times C$, with $|M|=n'$ and $|N|=n$. 
This decision problem asks the following question:
$$
\forall M'\subseteq M,\quad
\exists N'\subseteq N,\quad
M'\cup N'\text{ is a 3DM,}
$$
where ``$M'\cup N'$ is a 3DM'' means that any element of $A\cup B\cup C$ occurs exactly once in $M'\cup N'$.
  This is a $\text{coNP}^{\text{NP}}$-complete problem \citep{mcloughlin1984}.
For every $a_i\in A$ (resp. $b_j\in B$, $c_k\in C$), let $\#a_i$ (resp. $\#b_j$, $\#c_k$) denote the number of occurrences of $a_i$  (resp. $b_j$, $c_k$) in $M\cup N$: how many triplets contain $a_i$ (resp. $b_j$, $c_k$)? We identify elements and triplets with integers $i,j,k\in[d]$ and $t\in[n'+n]$.

We reduce this instance to the following \textsc{$\forall\exists$-4-Partition} instance. List $W$ contains the $4(n'+n)$ integers depicted below in basis $\beta=4(n'+n)d+1$ (definition in proof of Lemma \ref{lem:1}). For every triplet $t=(a_i,b_j,c_k)\in M\cup N$, there is one ``triplet'' integer $w(a_i,b_j,c_k)\in\mathbb{N}$.
For every element $a\in A$, we introduce one \emph{actual} integer $w(a)$ that represents the actual element intended to go with the triplets in the 3DM and $\#a-1$ \emph{dummies} that will go with the triplets that are not in the 3-dimensional matching.
Similarly, we introduce $\#b$ integers for each $b\in B$ and $\#c$ integers for each $c\in C$. Target $\theta=4\beta^5+15\beta^4$ is also depicted below.
Numbers are polynomially bounded by $\beta^6$.
$$
\arraycolsep=10pt
\begin{array}{lrcccccccl}

\forall t\in M,&
w({t=(a_ib_jc_k)}) = \langle & 1 & 1 & -i & -j & -k & 0& \rangle\\[1ex]

\forall a_i\in A,&
^{\quad\text{one actual}}_{\#a_i\!-\!1\text{ dum.}} ~w(a_i) = \langle& 1 & 2 & i & 0 & 0 & ^{-2\text{ (actual)}}_{~~0\text{ (dummy)}}& \rangle\\[1ex]

\forall b_j\in B,&
^{\quad\text{one actual}}_{\#b_j\!-\!1\text{ dum.}} ~w(b_j) = \langle& 1 & 4 & 0 & j & 0 & ^{+1\text{ (actual)}}_{~~0\text{ (dummy)}}& \rangle\\[1ex]

\forall c_k\in C,&
^{\quad\text{one actual}}_{\#c_k\!-\!1\text{ dum.}} ~w(c_k) = \langle& 1 & 8 & 0 & 0 & k & ^{+1\text{ (actual)}}_{~~0\text{ (dummy)}}& \rangle\\[1ex]
\hline
\textbf{target } & \theta = \langle& 4 & 15 & 0 & 0 & 0 & 0 & \rangle
\end{array}
$$
List $\mathcal{L}$ has length $\ell=|M|$: 
every triplet $t=(a_i,b_j,c_k)\in M$ is reduced to couple $u_tv_t$ between ``triplet'' integer $u_t=w(a_i,b_j,c_k)$ and ``actual'' integer $v_t=w(a_i)$. 
%This instance asks whether:
%$$
%\forall\sigma:[\ell]\rightarrow\{0,1\},\quad
%\exists \sigma\text{-satisfying partition of } W.
%$$

First, since $\beta$ is large enough, 
and column-wise offer (weights) equates demand (targets),
additions in
$W$ never have carryovers. 
Therefore, subsets must hit the target on each
column of this representation. Consequently, in any 4-partition of
$W$, there are four elements, one of each in the following: ``triplet'' integers, element-$a$ integers, element-$b$ integers and element-$c$ integers. Moreover, ``triplet'' integer $w(a_i,b_j,c_k)$ is with ``its'' elements $w(a_i)$, $w(b_j)$ and $w(c_k)$. Also, \emph{actual} elements must be in the same subset and dummies in the others. 
Therefore, any 3-dimensional matching $M'\cup N'$ is in correspondence with such a 4-partition.
Validity follows from the correspondence between $M'$ (taking or not elements in $M$) and $\sigma$ (enforcing integers $w(t)$ for $t\in M$ in the same subsets as its actual elements $w(a_i)$ and the two others.)

(yes$\Rightarrow$yes) Assume the 3DM instance is a yes one, and let $\sigma:[\ell]\rightarrow\{0,1\}$ be any couple enforcement/forbidding function. We construct a $\sigma$-satisfying 4-partition in correspondence with the following 3-dimensional matching $M'\cup N'$: for $t\in[\ell]\!\equiv\!M$, triplet $t$ is in $M'$ if and only if $\sigma(t)=1$; then the assumption gives $N'$ such that  $M'\cup N'$ is a 3DM. We construct the corresponding 4-partition (see paragraph above), and it is $\sigma$-satisfying. 

(yes$\Leftarrow$yes) Assume the partition instance is a yes one, and let us show that $\forall M'\subseteq M, \exists N'\subseteq N$ s.t. $M'\cup N'$ is a 3DM. Given $M'$, let $\sigma$ be defined as $\sigma(t)=1$ if and only if $t\in M'$. A $\sigma$-satisfying 4-partition exists, and is in correspondence with some 3DM $M'\cup N'$, by construction, as above.
\end{proof}

\begin{lemma}\label{lem:4}
\textsc{$\forall\exists$-4-Partition} $\leq_p$ \textsc{co-SPR/Stable/Exist}
\end{lemma}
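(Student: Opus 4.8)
The plan is to reduce \textsc{$\forall\exists$-4-Partition} to the complement of \textsc{SPR/Stable/Exist}, i.e. to build an SPR instance that has \emph{no} stable matching if and only if the given partition instance is a yes-instance. The guiding idea is to take the given list $W = (w_1,\dots,w_{4m})$, $\theta$, and the couples $\mathcal{L} = (u_1,v_1),\dots,(u_\ell,v_\ell)$, identify resources with the weights of $W$ (each resource $r_i$ has capacity $w_i$, compatible with all projects), and create $m$ ``partition projects'' $p_1,\dots,p_m$, each matched in the target matching $Y$ to exactly $\theta$ students who consider only that project acceptable (and reciprocally). With this core, an allocation $\mu$ makes $(Y,\mu)$ feasible exactly when $\mu^{-1}(p_1),\dots,\mu^{-1}(p_m)$ is a partition of $W$ into $m$ subsets each summing to $\theta$ (given the constraint $\theta/5 < w_i < \theta/3$, such a partition necessarily has all parts of size $4$, i.e. is a genuine 4-partition).

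The heart of the construction is to attach, for each couple $(u_t,v_t)\in\mathcal{L}$, a small ``gadget'' that simulates the map $\sigma$: namely a copy of the no-stable-matching gadget from Example \ref{counterex:stable} (two students, two projects, one unit resource, with the cyclic preferences) that becomes \emph{activatable} — i.e. forces wastefulness or envy somewhere — precisely when the resources $r_{u_t}$ and $r_{v_t}$ are \emph{not} both allocated to the same partition project. Concretely, I would wire each gadget so that the only way to avoid an internal claiming/envious pair in gadget $t$ is for $\sigma(t)$ (read off from whether $u_t,v_t$ sit together) to take one particular value, and I would make the gadgets collectively encode: a stable matching exists $\iff$ there is \emph{some} $\sigma$ for which \emph{no} $\sigma$-satisfying 4-partition of $W$ exists. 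Equivalently, no stable matching exists $\iff$ for every $\sigma$ a $\sigma$-satisfying partition exists $\iff$ the partition instance is a yes-instance. As in the earlier reductions, since the numbers of \textsc{$\forall\exists$-4-Partition} are polynomially bounded, the number of students is polynomial, so the reduction runs in polynomial time.

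The key steps, in order, are: (1) define the SPR instance — partition projects, resources as weights, the couple-gadgets, the target matching $Y$ and a base allocation $\mu$; (2) characterize feasible matchings $(Y,\mu)$ in the partition-project block as exactly the $4$-partitions of $W$ hitting $\theta$; (3) show that a gadget-$t$ copy of Example \ref{counterex:stable} is ``neutralized'' (admits a local stable sub-configuration) iff $u_t,v_t$ are placed according to a consistent Boolean value, thereby making every stable matching of the whole instance project down to a $\sigma$-satisfying partition for some $\sigma$ the adversary cannot satisfy; (4) conversely, show that if the partition instance is a no-instance then the $\sigma$ witnessing non-satisfiability yields a stable matching of the SPR instance; (5) assemble the biconditional and note polynomiality.

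The main obstacle I anticipate is step (3): getting the gadgets to interact \emph{correctly} with the global resource-allocation constraint. The subtlety is that nonwastefulness in SPR is defined with respect to a ``possibly new'' allocation $\mu'$, so a claiming pair inside a gadget can be repaired by re-shuffling \emph{any} resources, including those feeding the partition projects — so the gadgets are not truly local. I would need to design capacities and compatibility sets ($T_r$) so that rerouting a resource out of a partition project to plug a gadget immediately creates a claiming pair at that partition project (its seat count drops below $\theta$ while a student wants in), and symmetrically that the partition projects can never ``donate'' slack. Verifying that fairness (no envious pair) also behaves as intended — i.e. that within each partition project and each gadget, preferences can be set so that envy is either vacuous or equivalent to the intended Boolean constraint — is the delicate bookkeeping that makes or breaks the proof. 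Once the gadget is shown to be tight in both the wasteful and the envious direction, the rest is the same counting argument used in Lemmas \ref{lem:1}--\ref{lem:3}.
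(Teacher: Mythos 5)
There is a genuine gap, and it sits exactly where you flagged your ``main obstacle,'' but the fix you sketch does not repair it. You propose to read $\sigma(t)$ off the \emph{allocation} (whether $r_{u_t}$ and $r_{v_t}$ sit in the same partition project). But claiming pairs are evaluated with respect to an arbitrary \emph{new} allocation $\mu'$, while only the other students' assignments are frozen; so nothing in a candidate matching pins the value of $\sigma$. Worse, in your core construction the resources of $W$ exactly cover the $m\theta$ seats, so any deviation check may silently re-shuffle which couples are together (switching to any $\sigma'$) without disturbing a single student; your proposed safeguard (rerouting a resource out of a partition project creates a claiming pair there) only stops resources \emph{leaving} the block, not re-pairing \emph{within} it, which is precisely the degree of freedom $\sigma$ lives in. The paper's construction avoids this by encoding $\sigma(i)$ in the \emph{matching}: a group $\overline{s_{v_i}}$ of $v_i$ students with preference $p'_i\succ p_i$, a resource $r_{v_i}$ of capacity $v_i$ that is compatible with $p'_i$ and every $p_j$, $j\neq i$, but \emph{not} with $p_i$, and the weight $u_i$ hard-wired by giving $p_i$ only $\theta-u_i-v_i$ dedicated students. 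Because the rest of the matching is held fixed in a claiming-pair check, the induced $\sigma$ (demand $\theta-u_i-v_i$ versus $\theta-u_i$ at $p_i$, and availability of $r_{v_i}$) cannot be re-optimized away.

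The second gap is the certification of the inner quantifier. You need ``a stable matching exists iff \emph{some} $\sigma$ admits \emph{no} $\sigma$-satisfying partition,'' and the non-existence of a $\sigma$-satisfying partition is itself a coNP fact that the stability condition must witness. Your $\ell$ per-couple copies of Example \ref{counterex:stable}, each ``neutralized iff $\sigma(t)$ takes one particular value,'' can at best pin a single fixed $\sigma$ or test a property of the one partition realized by the candidate allocation; they give no mechanism by which \emph{non-existence over all partitions} stabilizes the instance, nor by which \emph{existence} destabilizes it. The paper does both with a single Example-\ref{counterex:stable} gadget plus one spare unit resource $r_1$ compatible with $p_a,p_b$ \emph{and} $p_1,\ldots,p_m$: if the $\sigma$ induced by a candidate matching admits a satisfying partition, the block can be saturated without $r_1$, so a claiming pair into $p_a$ or $p_b$ arises via an allocation diverting $r_1$ (and with $r_1$ in the gadget, Example \ref{counterex:stable} is unstable), killing every candidate; if some $\sigma$ admits no satisfying partition, the matching built from that $\sigma$ needs $r_1$ to fill a missing seat in the block, so the gadget is resource-starved and trivially stable. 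Without an analogue of this shared scarce resource and the matching-level encoding of $\sigma$, your reduction does not go through. (A minor further point: $\sigma$-satisfaction requires $u_i\in V_i$, i.e.\ the $u_i$'s lie in pairwise distinct, named subsets, which is stronger than ``$u_t$ and $v_t$ together or not.'')
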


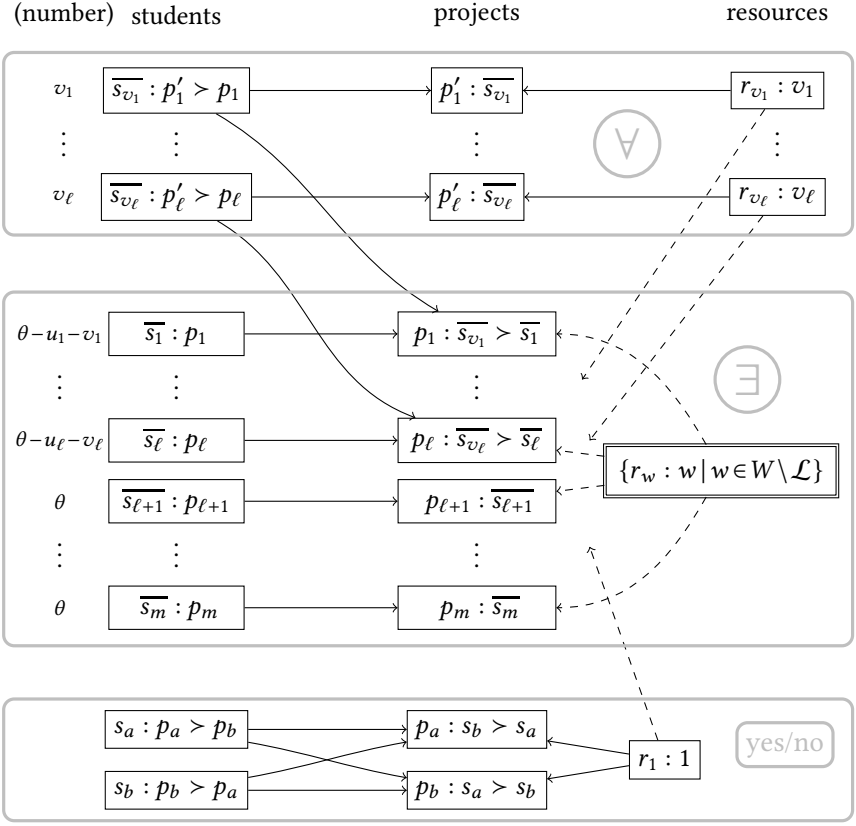
\begin{figure}[t]
\centering
\begin{tikzpicture}
%%projects up
\node at (0,1.0) [rectangle,draw] (p1p) {$p'_1:\overline{s_{v_1}}$};
\node at (0,0.4) [] (pdp) {$\vdots$};
\node at (0,-0.4) [rectangle,draw] (plp) {$p'_\ell:\overline{s_{v_\ell}}$};
%%resources up
\node at (4,1.0) [rectangle,draw] (rv1) {$r_{v_1}:v_1$};
\node at (4,0.4) [] (rdv) {$\vdots$};
\node at (4,-0.4) [rectangle,draw] (rvl) {$r_{v_\ell}:v_\ell$};
%%students up
\node at (-4,1.0) [rectangle,draw] (sv1) {$\overline{s_{v_1}}:p'_1\succ p_1$};
\node at (-5.5,1.0)[]{\footnotesize $v_1$};
\node at (-4,0.4) [] (sd1) {$\vdots$};
\node at (-5.5,0.4) [] (sd1) {$\vdots$};
\node at (-4,-0.4) [rectangle,draw] (svl) {$\overline{s_{v_\ell}}:p'_\ell\succ p_\ell$};
\node at (-5.5,-0.4)[]{\footnotesize $v_\ell$};
%%projects mid1
\node at (0,-2.2) [rectangle,draw,minimum width=2.1cm] (p1) {$p_1:\overline{s_{v_1}} \succ \overline{s_1}$};
\node at (0,-2.8) [] (pd1) {$\vdots$};
\node at (0,-3.6) [rectangle,draw,minimum width=2.1cm] (pl) {$p_\ell:\overline{s_{v_\ell}} \succ \overline{s_\ell}$};
%%students mid1
\node at (-4,-2.2) [rectangle,draw,minimum width=1.8cm] (s1) {$\overline{s_1}:p_1$};
\node at (-5.55,-2.2)[]{\footnotesize $\theta\!-\!u_1\!-\!v_1$};
\node at (-4,-2.8) [] (sd1) {$\vdots$};
\node at (-5.55,-2.8) [] (sd1) {$\vdots$};
\node at (-4,-3.6) [rectangle,draw,minimum width=1.8cm] (sl) {$\overline{s_\ell}:p_\ell$};
\node at (-5.55,-3.6)[]{\footnotesize $\theta\!-\!u_\ell\!-\!v_\ell$};
%%projects mid2
\node at (0,-4.4) [rectangle,draw,minimum width=2.1cm] (pl1) {~~$p_{\ell+1}:\overline{s_{\ell+1}}$~~};
\node at (0,-5.0) [] (pd2) {$\vdots$};
\node at (0,-5.8) [rectangle,draw,minimum width=2.1cm] (pm) {~~~~$p_m:\overline{s_m}$~~~~};
%%students mid2
\node at (-4,-4.4) [rectangle,draw,minimum width=1.8cm] (sl1) {$\overline{s_{\ell+1}}:p_{\ell+1}$};
\node at (-5.55,-4.4)[]{\footnotesize $\theta$};
\node at (-4,-5.0) [] (sd1) {$\vdots$};
\node at (-5.55,-5.0) [] (sd1) {$\vdots$};
\node at (-4,-5.8) [rectangle,draw,minimum width=1.8cm] (sm) {~~~$\overline{s_m}:p_m$~~~};
\node at (-5.55,-5.8)[]{\footnotesize $\theta$};
%%resources mid
\node at (3.25,-4) [rectangle,double, draw,inner sep=0.5em] (rw) {$\{r_w:w\!\mid\!w\!\in\!W\!\setminus\!\mathcal{L}\}$};
%%projects down
\node at (0,-7.4) [rectangle,draw] (pa) {$p_a:s_b\succ s_a$};
\node at (0,-8.2) [rectangle,draw] (pb) {$p_b:s_a\succ s_b$};
%%resource down
\node at (2.5,-7.8) [rectangle,draw] (rd) {$r_1:1$};
%%students down
\node at (-4,-7.4) [rectangle,draw] (sa) {$s_a:p_a\succ p_b$};
\node at (-4,-8.2) [rectangle,draw] (sb) {$s_b:p_b\succ p_a$};
%%LEFT-RIGHT
\path[->]
	(sv1) edge[out=0,in=180] (p1p)
	(sv1) edge[out=330,in=150] (p1)
	(svl) edge[out=0,in=180] (plp)
	(svl) edge[out=330,in=160] (pl)
	(s1) edge[out=0,in=180] (p1)
	(sl) edge[out=0,in=180] (pl)
	(sl1) edge[out=0,in=180] (pl1)
	(sm) edge[out=0,in=180] (pm)
	(sa) edge[out=0,in=180] (pa)
	(sa) edge[out=350,in=170] (pb)
	(sb) edge[out=10,in=190] (pa)
	(sb) edge[out=0,in=180] (pb);
%%RIGHT-LEFT
\path[->]
	(rv1) edge[out=180,in=0] (p1p)
	(rv1) edge[dashed]  (1.4,-2.8) 
	%node[above left = 3mm and -6mm] {$\{p_i\mid i\!\neq\! 1\}$}
	(rvl) edge[out=180,in=0] (plp)
	(rvl) edge[dashed]  (1.5,-3.6)
	%node[below right = 2mm and -4mm] {$\{p_i\mid i\!\neq\!\ell\}$}
	(rw) edge[out=120,in=0,dashed] (p1)
	(rw) edge[dashed] (pl)
	(rw) edge[dashed] (pl1)
	(rw) edge[out=240,in=0,dashed] (pm)
	(rd) edge[dashed] (1.5,-5.0) %node[right=-7mm]{$\{p_1\ldots p_m\}$}
	(rd) edge (pa)
	(rd) edge (pb);
%BIG FRAMES
\draw[draw=black!25,very thick, rounded corners] (-6.3,1.5) rectangle (5,-0.9);
\draw[draw=black!25,very thick, rounded corners] (-6.3,-1.65) rectangle (5,-6.3);
\draw[draw=black!25,very thick, rounded corners] (-6.3,-7) rectangle (5,-8.6);
\node at (2.0,0.3) [circle,draw,black!25,very thick] {\huge $\forall$};
\node at (3.6,-2.8) [circle,draw,black!25,very thick] {\huge $\exists$};
\node at (4.1,-7.6) [rectangle, rounded corners,draw,black!25,very thick] {\large yes/no};
%LEGEND
\node at (-5.5,2) {(number)};
\node at (-4,2) {students};
\node at (0,2) {projects};
\node at (4,2) {resources};
\end{tikzpicture}
\caption{From \textsc{$\forall\exists$-4-Partition} to \textsc{co-SPR/Stable/Exist}. Left-right arrows depict acceptable projects and right-left arrows, compatible projects. Dashed arrows go to any project $p_1\ldots p_m$, but $p_j$ for resource $r_{v_j}$.}\label{fig:2}
\end{figure}

\begin{proof}
Given a \textsc{$\forall\exists$-4-Partition} instance defined by $m\in\mathbb{N}$, list $W=\{w_1,\ldots,w_{4m}\}$, target $\theta\in\mathbb{N}$, and list of couples $\mathcal{L}=(u_1,v_1),\ldots,(u_\ell,v_\ell)$ of $W$, we construct a \textsc{co-SPR/Stable/Exist} instance depicted in Figure \ref{fig:2}. It contains:
\begin{itemize}
\item $\ell+m+2$ projects $p'_1,p'_2\ldots,p'_\ell$,\enskip $p_1,p_2,\ldots,p_m$\enskip and $p_a,p_b$,
\item $\ell$ subsets of students $\overline{s_{v_1}},\overline{s_{v_2}},\ldots,\overline{s_{v_\ell}}$ where each subset $\overline{s_{v_i}}$ contains $v_i$ students who all have preference $\overline{s_{v_i}}:p'_i\succ p_i\succ \emptyset$,
\item $m$ subsets of students $\overline{s_1},\overline{s_2},\ldots,\overline{s_\ell},\overline{s_{\ell+1}},\ldots,\overline{s_m}$ where each subset $\overline{s_{i}}$ for $i\in[\ell]$ contains $\theta-u_i-v_i$ students, each subset $\overline{s_{i}}$ for $i\in[\ell+1,m]$ contains $\theta$ students and in every subset $\overline{s_{i}}$ students all have preference $\overline{s_{i}}:p_i\succ\emptyset$, and
\item two students $s_a,s_b$ who have preferences $s_a:p_a\succ p_b\succ\emptyset$ and $s_b:p_b\succ p_a\succ\emptyset$.
\item For every $i\in[\ell]$, project $p'_i$ has preference $p'_i:\overline{s_{v_i}}\succ\emptyset$, and project $p_i$ has preference $p'_i:\overline{s_{v_i}}\succ\overline{s_{i}}\succ\emptyset$. 
For every $i\in[\ell+1,m]$, project $p_i$ has preference $p_i:\overline{s_{i}}\succ\emptyset$.
Project $p_a$ has preference $p_a:s_b\succ s_a$ and $p_b$ preference $p_b:s_a\succ s_b$  (as in Example \ref{counterex:stable}).
\end{itemize}
Since \textsc{$\forall\exists$-4-Partition} is strongly hard, we can assume that its numbers are polynomially bounded (e.g. w.r.t. $m$); hence, there is a polynomial number of students. 
There are $|W|-\ell+1$ resources:
\begin{itemize}
\item for every $i\in[\ell]$, resource $r_{v_i}$ has capacity $q_{r_{v_i}}=v_i$ and is compatible with $\{p'_i\}\cup\{p_j\mid j\neq i\}$,
\item for every weight $w\in W\setminus\mathcal{L}$ resource $r_w$ has capacity $q_{r_w}=w$ and $T_{r_w}=\{p_1,\ldots,p_m\}$, and
\item resource $r_1$ has capacity $q_{r_1}=1$ and compatibilities $T_{r_1}=\{p_a,p_b\}\cup\{p_1,\ldots,p_m\}$.
\end{itemize}

The idea is that capacity requirements of projects $p_1,\ldots,p_m$ model the $m$ targets of a 4-partition. Since integers $u_1,\ldots,u_\ell$ are in $V_1,\ldots,V_\ell$, we already subtract them from $p_1,\ldots,p_\ell$. The universal quantifier is encoded as follows.
\begin{itemize}
\setlength{\itemindent}{5mm}
\item[\underline{$\sigma(i)\!=\!1$:}] Enforcing $u_i$ and $v_i$ together in a 4-partition will correspond to letting the capacity requirement of project $p_i$ be $\theta-u_i-v_i$ (like if $u_i$ and $v_i$ were already inside): students $\overline{s_{v_i}}$ are matched with $p'_i$ and resources $r_{v_i}$ are allocated to $p'_i$. 
\item[\underline{$\sigma(i)\!=\!0$:}] Conversely, forbidding $u_i$ and $v_i$ to be together in a 4-partition will correspond to trying to match $\overline{s_{v_i}}$ with $p_i$, hence bringing its capacity requirement to $\theta-u_i$, while resource $r_{v_i}$ cannot be allocated to $p_i$. 
\end{itemize}
We are now set to formally prove the validity of this reduction.

(yes$\Rightarrow$yes)
For each $\sigma:[\ell]\rightarrow\{0,1\}$, there is a $\sigma$-satisfying 4-partition $V_1,\ldots,V_{\ell},V_{\ell+1},\ldots,V_m$.
For the sake of contradiction, let us assume that there exists a stable matching $(Y,\mu)$.
By definition, for each resource $r_{v_i}, i\in[\ell]$, either (1) $\mu(r_{v_i})=p'_i$ or (2) $\mu(r_{v_i})\in\{p_j\mid j\neq i\}$.

Let us consider a particular mapping $\sigma$ defined by $\sigma(i)=1$ if (1), and $\sigma(i)=0$ if (2).
By premise, there exists a $\sigma$-satisfying 4-partition $V_1,\ldots,V_{\ell},V_{\ell+1},\ldots,V_m$: for every $i\in[\ell]$, first $u_i\in V_i$ and second $v_i\in V_i$ if and only if $\sigma(i)=1$.  
From this $\sigma$-satisfying 4-partition, there exists an allocation of $\{r_{v_i}\mid \sigma(i)=0\}$ and $\{r_w\mid w\in W\setminus\mathcal{L}\}$ to projects $p_1,\ldots,p_m$ that makes feasible the full matching $Y(\overline{s_i})=p_i,\forall i\in[m]$ and $Y(\overline{s_{v_i}})=p_i,\forall i\in[m]\text{ s.t. }\sigma(i)=0$.
Therefore it would be wasteful to use resource $r_1$ on projects $\{p_1,\ldots,p_m\}$, contradicting stability, and consequently $r_1$ is allocated to $p_a$ or $p_b$. The SPR defined by $s_a,s_b,p_a,p_b,r_1$ cannot be stable (as in Example \ref{counterex:stable}). Consequently, a stable matching is impossible.

(no$\Rightarrow$no)
Assume that there exists a mapping $\sigma$ such that no $\sigma$-satisfying 4-partition exists, 
and let us build a stable matching $(Y,\mu)$ as follows. For every $i\in[\ell]$:
\begin{itemize}
\item if $\sigma(i)=1$, then $Y(\overline{s_{v_i}})=\{p'_i\}$ and $\mu(r_{v_i})=p'_i$;
\item if $\sigma(i)=0$, then $Y(\overline{s_{v_i}})=\{p_i\}$ and $\mu(r_{v_i})\!\in\!\{p_j\mid j\!\neq\!i\}$.
\end{itemize}
Then, we allocate the other resources ($\{r_w\mid w\in W\setminus\mathcal{L}\}$ \emph{and} $r_1$) in a way that minimizes the number of unmatched students in $\overline{s_1},\ldots,\overline{s_m}$.

The students from $\overline{s_{v_1}},\ldots,\overline{s_{v_\ell}}$ cannot be involved in a claiming (or envious) pair since they obtain their top choice if matched to $p'_i$, and one claiming pair from $p_i$ to $p'_i$ would deprive $p_1,\ldots,p_m$ from resource $r_{v_i}$ (not allocated to $p_i$), which is not feasible.
Since the number of unmatched students in $\overline{s_1},\ldots,\overline{s_m}$ is minimized, one more seat is not possible.
Since no $\sigma$-satisfying 4-partition exists, without resource $r_1$, some projects in $p_1,\ldots,p_m$ would loose a seat. Then $r_1$ cannot be re-allocated to $p_a$ or $p_b$ without canceling a seat. The remaining SPR defined by $s_a,s_b,p_a,p_b$ has no resource at all and is therefore stable.
\end{proof}

\section{Related Work}

This paper follows a stream of works
  dealing with constrained matching. 
Two-sided matching has been 
attracting considerable attention from AI and TCS
researchers~\citep{aziz2017stable,hamada2017weighted,hosseini2015manipulablity,kawase2017near}.
A standard market deals with maximum quotas, i.e., capacity limits that
cannot be exceeded.
However, many real-world matching markets are subject to a variety of
distributional constraints~\citep{kty:2014}, including regional maximum quotas, which
restrict the total number of students assigned to a set of schools%
~\citep{kamakoji-basic}, minimum quotas, which guarantee that 
a certain number of students are assigned to each school%
~\citep{fragiadakis::2012,Goto:aamas:2014,kurata:aaams2016,%
sonmez_switzer2013,sonmez_rotc2011}, and 
diversity constraints%
% , which enforce that a school satisfies a
% balance between different types
%(e.g., socioeconomic status) 
%of students
~\citep{hafalir2013effective,ehlers::2012,kojima2012school,kurata:jair2017}. 
  Other works examine the
computational complexity for finding a matching with desirable properties under distributional constraints, 
including~\citep{%
biro:tcs:2010,Fleiner16,hamda:esa:2011}.
A similar model was recently considered \citet{ismaili2018prima}, but with a compact representation scheme which handles exponentially many students and induces intrinsically different computational problems.

  There exist several works on three-sided matching problems~\cite{%
    % abraham2007two,ManloveO08,Atta:2009,
alkan1988,NgH91,huang2007} where three types of players/agents, e.g., males, females, and
pets, are matched. 
Although their model might look superficially similar to our model,
they are fundamentally different.
In the student-project allocation problem \citet{abraham2007two}, students are matched to projects,
while each project is offered by a lecturer. A student has a preference
over projects, and a lecturer has a preference over students. 
Each lecturer has her capacity limit. 
This problem can be considered as a standard two-sided matching
problem with distributional constraints. More specifically, 
this problem is equivalent to
a two-sided matching problem with regional maximum quotas \citet{kty:2014}.
A 3/2-approximation algorithm exists for the student-project allocation problem \cite{CooperM18}, and one can also obtain super-stability, despite ties \cite{OlaosebikanM18}.
 In our model, 
a resource is not an agent/player; it has no preference over
projects/students.
Also, a project/student has no preference over resources;
a project just needs to be allocated enough resources to accommodate
applying students.

%\input{mechanism}

%\input{conclusions}

% Bibliography
\bibliographystyle{ACM-Reference-Format}
\bibliography{MAIN}

%%% -*-BibTeX-*-
%%% Do NOT edit. File created by BibTeX with style
%%% ACM-Reference-Format-Journals [18-Jan-2012].

 \newcommand{\noop}[1]{}
\begin{thebibliography}{30}

%%% ====================================================================
%%% NOTE TO THE USER: you can override these defaults by providing
%%% customized versions of any of these macros before the \bibliography
%%% command.  Each of them MUST provide its own final punctuation,
%%% except for \shownote{}, \showDOI{}, and \showURL{}.  The latter two
%%% do not use final punctuation, in order to avoid confusing it with
%%% the Web address.
%%%
%%% To suppress output of a particular field, define its macro to expand
%%% to an empty string, or better, \unskip, like this:
%%%
%%% \newcommand{\showDOI}[1]{\unskip}   % LaTeX syntax
%%%
%%% \def \showDOI #1{\unskip}           % plain TeX syntax
%%%
%%% ====================================================================

\ifx \showCODEN    \undefined \def \showCODEN     #1{\unskip}     \fi
\ifx \showDOI      \undefined \def \showDOI       #1{#1}\fi
\ifx \showISBNx    \undefined \def \showISBNx     #1{\unskip}     \fi
\ifx \showISBNxiii \undefined \def \showISBNxiii  #1{\unskip}     \fi
\ifx \showISSN     \undefined \def \showISSN      #1{\unskip}     \fi
\ifx \showLCCN     \undefined \def \showLCCN      #1{\unskip}     \fi
\ifx \shownote     \undefined \def \shownote      #1{#1}          \fi
\ifx \showarticletitle \undefined \def \showarticletitle #1{#1}   \fi
\ifx \showURL      \undefined \def \showURL       {\relax}        \fi
% The following commands are used for tagged output and should be
% invisible to TeX
\providecommand\bibfield[2]{#2}
\providecommand\bibinfo[2]{#2}
\providecommand\natexlab[1]{#1}
\providecommand\showeprint[2][]{arXiv:#2}

\bibitem[\protect\citeauthoryear{Abraham, Irving, and Manlove}{Abraham
  et~al\mbox{.}}{2007}]%
        {abraham2007two}
\bibfield{author}{\bibinfo{person}{David~J. Abraham},
  \bibinfo{person}{Robert~W. Irving}, {and} \bibinfo{person}{David~F.
  Manlove}.} \bibinfo{year}{2007}\natexlab{}.
\newblock \showarticletitle{Two Algorithms for the Student-Project Allocation
  Problem}.
\newblock \bibinfo{journal}{\emph{Journal of Discrete Algorithms}}
  \bibinfo{volume}{5}, \bibinfo{number}{1} (\bibinfo{year}{2007}),
  \bibinfo{pages}{73--90}.
\newblock


\bibitem[\protect\citeauthoryear{Alkan}{Alkan}{1988}]%
        {alkan1988}
\bibfield{author}{\bibinfo{person}{Ahmet Alkan}.}
  \bibinfo{year}{1988}\natexlab{}.
\newblock \showarticletitle{Nonexistence of Stable Threesome Matchings}.
\newblock \bibinfo{journal}{\emph{Mathematical Social Sciences}}
  \bibinfo{volume}{16}, \bibinfo{number}{2} (\bibinfo{year}{1988}),
  \bibinfo{pages}{207--209}.
\newblock


\bibitem[\protect\citeauthoryear{Aziz, Bir{\'o}, Fleiner, Gaspers, de~Haan,
  Mattei, and Rastegari}{Aziz et~al\mbox{.}}{2017}]%
        {aziz2017stable}
\bibfield{author}{\bibinfo{person}{Haris Aziz}, \bibinfo{person}{P{\'e}ter
  Bir{\'o}}, \bibinfo{person}{Tam{\'a}s Fleiner}, \bibinfo{person}{Serge
  Gaspers}, \bibinfo{person}{Ronald de Haan}, \bibinfo{person}{Nicholas
  Mattei}, {and} \bibinfo{person}{Baharak Rastegari}.}
  \bibinfo{year}{2017}\natexlab{}.
\newblock \showarticletitle{Stable Matching with Uncertain Pairwise
  Preferences}. In \bibinfo{booktitle}{\emph{Proceedings of the 16th
  International Conference on Autonomous Agents and MultiAgent Systems
  (AAMAS-2017)}}. \bibinfo{pages}{344--352}.
\newblock


\bibitem[\protect\citeauthoryear{Bir{\'o}, Fleiner, Irving, and
  Manlove}{Bir{\'o} et~al\mbox{.}}{2010}]%
        {biro:tcs:2010}
\bibfield{author}{\bibinfo{person}{P{\'e}ter Bir{\'o}},
  \bibinfo{person}{Tam{\'a}s Fleiner}, \bibinfo{person}{Robert~W. Irving},
  {and} \bibinfo{person}{David~F. Manlove}.} \bibinfo{year}{2010}\natexlab{}.
\newblock \showarticletitle{The College Admissions Problem with Lower and
  Common Quotas}.
\newblock \bibinfo{journal}{\emph{Theoretical Computer Science}}
  \bibinfo{volume}{411}, \bibinfo{number}{34-36} (\bibinfo{year}{2010}),
  \bibinfo{pages}{3136--3153}.
\newblock


\bibitem[\protect\citeauthoryear{Cooper and Manlove}{Cooper and
  Manlove}{2018}]%
        {CooperM18}
\bibfield{author}{\bibinfo{person}{Frances Cooper} {and} \bibinfo{person}{David
  Manlove}.} \bibinfo{year}{2018}\natexlab{}.
\newblock \showarticletitle{A 3/2-Approximation Algorithm for the
  Student-Project Allocation Problem}. In \bibinfo{booktitle}{\emph{17th
  International Symposium on Experimental Algorithms, {SEA} 2018, June 27-29,
  2018, L'Aquila, Italy}}. \bibinfo{pages}{8:1--8:13}.
\newblock
\urldef\tempurl%
\url{https://doi.org/10.4230/LIPIcs.SEA.2018.8}
\showDOI{\tempurl}


\bibitem[\protect\citeauthoryear{Ehlers, Hafalir, Yenmez, and Yildirim}{Ehlers
  et~al\mbox{.}}{2014}]%
        {ehlers::2012}
\bibfield{author}{\bibinfo{person}{Lars Ehlers}, \bibinfo{person}{Isa~E.
  Hafalir}, \bibinfo{person}{M.~Bumin Yenmez}, {and}
  \bibinfo{person}{Muhammed~A. Yildirim}.} \bibinfo{year}{2014}\natexlab{}.
\newblock \showarticletitle{School Choice with Controlled Choice Constraints:
  Hard Bounds versus Soft Bounds}.
\newblock \bibinfo{journal}{\emph{Journal of Economic Theory}}
  \bibinfo{volume}{153} (\bibinfo{year}{2014}), \bibinfo{pages}{648--683}.
\newblock


\bibitem[\protect\citeauthoryear{Fleiner and Kamiyama}{Fleiner and
  Kamiyama}{2016}]%
        {Fleiner16}
\bibfield{author}{\bibinfo{person}{Tam{\'{a}}s Fleiner} {and}
  \bibinfo{person}{Naoyuki Kamiyama}.} \bibinfo{year}{2016}\natexlab{}.
\newblock \showarticletitle{A Matroid Approach to Stable Matchings with Lower
  Quotas}.
\newblock \bibinfo{journal}{\emph{Mathematics of Operations Research}}
  \bibinfo{volume}{41}, \bibinfo{number}{2} (\bibinfo{year}{2016}),
  \bibinfo{pages}{734--744}.
\newblock


\bibitem[\protect\citeauthoryear{Fragiadakis, Iwasaki, Troyan, Ueda, and
  Yokoo}{Fragiadakis et~al\mbox{.}}{2016}]%
        {fragiadakis::2012}
\bibfield{author}{\bibinfo{person}{Daniel Fragiadakis},
  \bibinfo{person}{Atsushi Iwasaki}, \bibinfo{person}{Peter Troyan},
  \bibinfo{person}{Suguru Ueda}, {and} \bibinfo{person}{Makoto Yokoo}.}
  \bibinfo{year}{2016}\natexlab{}.
\newblock \showarticletitle{Strategyproof Matching with Minimum Quotas}.
\newblock \bibinfo{journal}{\emph{ACM Transactions on Economics and
  Computation}} \bibinfo{volume}{4}, \bibinfo{number}{1}
  (\bibinfo{year}{2016}), \bibinfo{pages}{6:1--6:40}.
\newblock


\bibitem[\protect\citeauthoryear{Gale and Shapley}{Gale and Shapley}{1962}]%
        {Gale:AMM:1962}
\bibfield{author}{\bibinfo{person}{David Gale} {and}
  \bibinfo{person}{Lloyd~Stowell Shapley}.} \bibinfo{year}{1962}\natexlab{}.
\newblock \showarticletitle{College Admissions and the Stability of Marriage}.
\newblock \bibinfo{journal}{\emph{The American Mathematical Monthly}}
  \bibinfo{volume}{69}, \bibinfo{number}{1} (\bibinfo{year}{1962}),
  \bibinfo{pages}{9--15}.
\newblock


\bibitem[\protect\citeauthoryear{Garey and Johnson}{Garey and Johnson}{1979}]%
        {garey1979computers}
\bibfield{author}{\bibinfo{person}{Michael~R Garey} {and}
  \bibinfo{person}{David~S Johnson}.} \bibinfo{year}{1979}\natexlab{}.
\newblock \showarticletitle{Computers and intractability: a guide to the theory
  of NP-completeness}.
\newblock  (\bibinfo{year}{1979}).
\newblock


\bibitem[\protect\citeauthoryear{Gasarch, Krentel, and Rappoport}{Gasarch
  et~al\mbox{.}}{1995}]%
        {gasarch1995optp}
\bibfield{author}{\bibinfo{person}{William~I. Gasarch},
  \bibinfo{person}{Mark~W. Krentel}, {and} \bibinfo{person}{Kevin~J.
  Rappoport}.} \bibinfo{year}{1995}\natexlab{}.
\newblock \showarticletitle{Opt{P} as the Normal Behavior of {NP}-complete
  Problems}.
\newblock \bibinfo{journal}{\emph{Mathematical Systems Theory}}
  \bibinfo{volume}{28}, \bibinfo{number}{6} (\bibinfo{year}{1995}),
  \bibinfo{pages}{487--514}.
\newblock


\bibitem[\protect\citeauthoryear{Goto, Iwasaki, Kawasaki, Kurata, Yasuda, and
  Yokoo}{Goto et~al\mbox{.}}{2016}]%
        {Goto:aamas:2014}
\bibfield{author}{\bibinfo{person}{Masahiro Goto}, \bibinfo{person}{Atsushi
  Iwasaki}, \bibinfo{person}{Yujiro Kawasaki}, \bibinfo{person}{Ryoji Kurata},
  \bibinfo{person}{Yosuke Yasuda}, {and} \bibinfo{person}{Makoto Yokoo}.}
  \bibinfo{year}{2016}\natexlab{}.
\newblock \showarticletitle{Strategyproof Matching with Regional Minimum and
  Maximum Quotas}.
\newblock \bibinfo{journal}{\emph{Artificial Intelligence}}
  \bibinfo{volume}{235} (\bibinfo{year}{2016}), \bibinfo{pages}{40--57}.
\newblock


\bibitem[\protect\citeauthoryear{Goto, Kojima, Kurata, Tamura, and Yokoo}{Goto
  et~al\mbox{.}}{2017}]%
        {Goto:AEJ-micro:2016}
\bibfield{author}{\bibinfo{person}{Masahiro Goto}, \bibinfo{person}{Fuhiko
  Kojima}, \bibinfo{person}{Ryoji Kurata}, \bibinfo{person}{Akihisa Tamura},
  {and} \bibinfo{person}{Makoto Yokoo}.} \bibinfo{year}{2017}\natexlab{}.
\newblock \showarticletitle{Designing Matching Mechanisms under General
  Distributional Constraints}.
\newblock \bibinfo{journal}{\emph{American Economic Journal: Microeconomics}}
  \bibinfo{volume}{9}, \bibinfo{number}{2} (\bibinfo{year}{2017}),
  \bibinfo{pages}{226--262}.
\newblock


\bibitem[\protect\citeauthoryear{Hafalir, Yenmez, and Yildirim}{Hafalir
  et~al\mbox{.}}{2013}]%
        {hafalir2013effective}
\bibfield{author}{\bibinfo{person}{Isa~E. Hafalir}, \bibinfo{person}{M.~Bumin
  Yenmez}, {and} \bibinfo{person}{Muhammed~A. Yildirim}.}
  \bibinfo{year}{2013}\natexlab{}.
\newblock \showarticletitle{Effective Affirmative Action in School Choice}.
\newblock \bibinfo{journal}{\emph{Theoretical Economics}} \bibinfo{volume}{8},
  \bibinfo{number}{2} (\bibinfo{year}{2013}), \bibinfo{pages}{325--363}.
\newblock


\bibitem[\protect\citeauthoryear{Hamada, Iwama, and Miyazaki}{Hamada
  et~al\mbox{.}}{2016}]%
        {hamda:esa:2011}
\bibfield{author}{\bibinfo{person}{Koki Hamada}, \bibinfo{person}{Kazuo Iwama},
  {and} \bibinfo{person}{Shuichi Miyazaki}.} \bibinfo{year}{2016}\natexlab{}.
\newblock \showarticletitle{The Hospitals/Residents Problem with Lower Quotas}.
\newblock \bibinfo{journal}{\emph{Algorithmica}} \bibinfo{volume}{74},
  \bibinfo{number}{1} (\bibinfo{year}{2016}), \bibinfo{pages}{440--465}.
\newblock


\bibitem[\protect\citeauthoryear{Hamada, Hsu, Kurata, Suzuki, Ueda, and
  Yokoo}{Hamada et~al\mbox{.}}{2017a}]%
        {kurata:aaams2016}
\bibfield{author}{\bibinfo{person}{Naoto Hamada}, \bibinfo{person}{Chia{-}Ling
  Hsu}, \bibinfo{person}{Ryoji Kurata}, \bibinfo{person}{Takamasa Suzuki},
  \bibinfo{person}{Suguru Ueda}, {and} \bibinfo{person}{Makoto Yokoo}.}
  \bibinfo{year}{2017}\natexlab{a}.
\newblock \showarticletitle{Strategy-proof School Choice Mechanisms with
  Minimum Quotas and Initial Endowments}.
\newblock \bibinfo{journal}{\emph{Artificial Intelligence}}
  \bibinfo{volume}{249} (\bibinfo{year}{2017}), \bibinfo{pages}{47--71}.
\newblock


\bibitem[\protect\citeauthoryear{Hamada, Ismaili, Suzuki, and Yokoo}{Hamada
  et~al\mbox{.}}{2017b}]%
        {hamada2017weighted}
\bibfield{author}{\bibinfo{person}{Naoto Hamada}, \bibinfo{person}{Anisse
  Ismaili}, \bibinfo{person}{Takamasa Suzuki}, {and} \bibinfo{person}{Makoto
  Yokoo}.} \bibinfo{year}{2017}\natexlab{b}.
\newblock \showarticletitle{Weighted Matching Markets with Budget Constraints}.
  In \bibinfo{booktitle}{\emph{Proceedings of the 16th International Conference
  on Autonomous Agents and MultiAgent Systems (AAMAS-2017)}}.
  \bibinfo{pages}{317--325}.
\newblock


\bibitem[\protect\citeauthoryear{Hosseini, Larson, and Cohen}{Hosseini
  et~al\mbox{.}}{2015}]%
        {hosseini2015manipulablity}
\bibfield{author}{\bibinfo{person}{Hadi Hosseini}, \bibinfo{person}{Kate
  Larson}, {and} \bibinfo{person}{Robin Cohen}.}
  \bibinfo{year}{2015}\natexlab{}.
\newblock \showarticletitle{On Manipulablity of Random Serial Dictatorship in
  Sequential Matching with Dynamic Preferences}. In
  \bibinfo{booktitle}{\emph{Proceedings of the 29th {AAAI} Conference on
  Artificial Intelligence (AAAI-2015)}}. \bibinfo{pages}{4168--4169}.
\newblock


\bibitem[\protect\citeauthoryear{Huang}{Huang}{2007}]%
        {huang2007}
\bibfield{author}{\bibinfo{person}{Chien-Chung Huang}.}
  \bibinfo{year}{2007}\natexlab{}.
\newblock \showarticletitle{Two's Company, Three's a Crowd: Stable Family and
  Threesome Roommates Problems}. In \bibinfo{booktitle}{\emph{Proceedings of
  the 15th Annual European Symposium on Algorithms (ESA-2007)}}.
  \bibinfo{pages}{558--569}.
\newblock


\bibitem[\protect\citeauthoryear{Ismaili, Yamaguchi, and Yokoo}{Ismaili
  et~al\mbox{.}}{2018}]%
        {ismaili2018prima}
\bibfield{author}{\bibinfo{person}{Anisse Ismaili}, \bibinfo{person}{Tomoaki
  Yamaguchi}, {and} \bibinfo{person}{Makoto Yokoo}.}
  \bibinfo{year}{2018}\natexlab{}.
\newblock \showarticletitle{Student-Project-Resource Allocation: Complexity of
  the Symmetric Case}. In \bibinfo{booktitle}{\emph{Proceedings of the 21st
  International Conference on Principles and Practice of Multi-agent Systems
  (PRIMA-2018)}}. \bibinfo{pages}{226--241}.
\newblock


\bibitem[\protect\citeauthoryear{Kamada and Kojima}{Kamada and Kojima}{2015}]%
        {kamakoji-basic}
\bibfield{author}{\bibinfo{person}{Yuichiro Kamada} {and}
  \bibinfo{person}{Fuhito Kojima}.} \bibinfo{year}{2015}\natexlab{}.
\newblock \showarticletitle{Efficient Matching under Distributional
  Constraints: Theory and Applications}.
\newblock \bibinfo{journal}{\emph{American Economic Review}}
  \bibinfo{volume}{105}, \bibinfo{number}{1} (\bibinfo{year}{2015}),
  \bibinfo{pages}{67--99}.
\newblock


\bibitem[\protect\citeauthoryear{Kawase and Iwasaki}{Kawase and
  Iwasaki}{2017}]%
        {kawase2017near}
\bibfield{author}{\bibinfo{person}{Yasushi Kawase} {and}
  \bibinfo{person}{Atsushi Iwasaki}.} \bibinfo{year}{2017}\natexlab{}.
\newblock \showarticletitle{Near-Feasible Stable Matchings with Budget
  Constraints}. In \bibinfo{booktitle}{\emph{Proceedings of the 26th
  International Joint Conference on Artificial Intelligence (IJCAI-2017)}}.
  \bibinfo{pages}{242--248}.
\newblock


\bibitem[\protect\citeauthoryear{Kojima}{Kojima}{2012}]%
        {kojima2012school}
\bibfield{author}{\bibinfo{person}{Fuhito Kojima}.}
  \bibinfo{year}{2012}\natexlab{}.
\newblock \showarticletitle{School Choice: Impossibilities for Affirmative
  Action}.
\newblock \bibinfo{journal}{\emph{Games and Economic Behavior}}
  \bibinfo{volume}{75}, \bibinfo{number}{2} (\bibinfo{year}{2012}),
  \bibinfo{pages}{685--693}.
\newblock


\bibitem[\protect\citeauthoryear{Kojima, Tamura, and Yokoo}{Kojima
  et~al\mbox{.}}{2018}]%
        {kty:2014}
\bibfield{author}{\bibinfo{person}{Fuhito Kojima}, \bibinfo{person}{Akihisa
  Tamura}, {and} \bibinfo{person}{Makoto Yokoo}.}
  \bibinfo{year}{2018}\natexlab{}.
\newblock \showarticletitle{Designing Matching Mechanisms Under Constraints: An
  Approach from Discrete Convex Analysis}.
\newblock \bibinfo{journal}{\emph{Journal of Economic Theory}}
  \bibinfo{volume}{176} (\bibinfo{year}{2018}), \bibinfo{pages}{803--833}.
\newblock


\bibitem[\protect\citeauthoryear{Kurata, Hamada, Iwasaki, and Yokoo}{Kurata
  et~al\mbox{.}}{2017}]%
        {kurata:jair2017}
\bibfield{author}{\bibinfo{person}{Ryoji Kurata}, \bibinfo{person}{Naoto
  Hamada}, \bibinfo{person}{Atsushi Iwasaki}, {and} \bibinfo{person}{Makoto
  Yokoo}.} \bibinfo{year}{2017}\natexlab{}.
\newblock \showarticletitle{Controlled School Choice with Soft Bounds and
  Overlapping Types}.
\newblock \bibinfo{journal}{\emph{Journal of Artificial Intelligence Research}}
   \bibinfo{volume}{58} (\bibinfo{year}{2017}), \bibinfo{pages}{153--184}.
\newblock


\bibitem[\protect\citeauthoryear{McLoughlin}{McLoughlin}{1984}]%
        {mcloughlin1984}
\bibfield{author}{\bibinfo{person}{Aileen~M. McLoughlin}.}
  \bibinfo{year}{1984}\natexlab{}.
\newblock \showarticletitle{The Complexity of Computing the Covering Radius of
  a Code}.
\newblock \bibinfo{journal}{\emph{IEEE Transactions on Information Theory}}
  \bibinfo{volume}{30}, \bibinfo{number}{6} (\bibinfo{year}{1984}),
  \bibinfo{pages}{800--804}.
\newblock


\bibitem[\protect\citeauthoryear{Ng and Hirschberg}{Ng and Hirschberg}{1991}]%
        {NgH91}
\bibfield{author}{\bibinfo{person}{Cheng Ng} {and} \bibinfo{person}{Daniel~S.
  Hirschberg}.} \bibinfo{year}{1991}\natexlab{}.
\newblock \showarticletitle{Three-Dimensional Stable Matching Problems}.
\newblock \bibinfo{journal}{\emph{{SIAM} Journal on Discrete Mathematics}}
  \bibinfo{volume}{4}, \bibinfo{number}{2} (\bibinfo{year}{1991}),
  \bibinfo{pages}{245--252}.
\newblock


\bibitem[\protect\citeauthoryear{Olaosebikan and Manlove}{Olaosebikan and
  Manlove}{2018}]%
        {OlaosebikanM18}
\bibfield{author}{\bibinfo{person}{Sofiat Olaosebikan} {and}
  \bibinfo{person}{David Manlove}.} \bibinfo{year}{2018}\natexlab{}.
\newblock \showarticletitle{Super-Stability in the Student-Project Allocation
  Problem with Ties}. In \bibinfo{booktitle}{\emph{Combinatorial Optimization
  and Applications - 12th International Conference, {COCOA} 2018, Atlanta, GA,
  USA, December 15-17, 2018, Proceedings}}. \bibinfo{pages}{357--371}.
\newblock
\urldef\tempurl%
\url{https://doi.org/10.1007/978-3-030-04651-4\_24}
\showDOI{\tempurl}


\bibitem[\protect\citeauthoryear{S{\"o}nmez}{S{\"o}nmez}{2013}]%
        {sonmez_rotc2011}
\bibfield{author}{\bibinfo{person}{Tayfun S{\"o}nmez}.}
  \bibinfo{year}{2013}\natexlab{}.
\newblock \showarticletitle{Bidding for Army Career Specialties: Improving the
  {ROTC} Branching Mechanism}.
\newblock \bibinfo{journal}{\emph{Journal of Political Economy}}
  \bibinfo{volume}{121}, \bibinfo{number}{1} (\bibinfo{year}{2013}),
  \bibinfo{pages}{186--219}.
\newblock


\bibitem[\protect\citeauthoryear{S{\"{o}}nmez and Switzer}{S{\"{o}}nmez and
  Switzer}{2013}]%
        {sonmez_switzer2013}
\bibfield{author}{\bibinfo{person}{Tayfun S{\"{o}}nmez} {and}
  \bibinfo{person}{Tobias~B. Switzer}.} \bibinfo{year}{2013}\natexlab{}.
\newblock \showarticletitle{Matching with (Branch-of-Choice) Contracts at the
  {United States} Military Academy}.
\newblock \bibinfo{journal}{\emph{Econometrica}} \bibinfo{volume}{81},
  \bibinfo{number}{2} (\bibinfo{year}{2013}), \bibinfo{pages}{451--488}.
\newblock
\showISSN{1468-0262}


\end{thebibliography}

\end{document}